\begin{document}

\title{Full Square Rhomboids and Their Algebraic Expressions}
\author{Mark Korenblit \\
\institute{Holon Institute of Technology, Israel\\
korenblit@hit.ac.il}}
\maketitle

\begin{abstract}
The paper investigates relationship between algebraic expressions and
graphs. We consider a digraph called a full square rhomboid that is an
example of non-series-parallel graphs. Our intention is to simplify the
expressions of full square rhomboids and eventually find their shortest
representations. With that end in view, we describe two decomposition
methods for generating expressions of full square rhomboids and carry out
their comparative analysis.\medskip\ 
\end{abstract}

\section{Introduction\label{intro}}

A \textit{graph }$G=(V,E)$ consists of a \textit{vertex set\ }$V$ and an 
\textit{edge set\ }$E$, where each edge corresponds to a pair $(v,w)$ of
vertices. A graph $G^{\prime}=(V^{\prime},E^{\prime})$ is a \textit{subgraph}
of $G=(V,E)$ if $V^{\prime}\subseteq V$ and $E^{\prime}\subseteq E$. A graph 
$G$ is a \textit{homeomorph} of $G^{\prime}$ if $G$ can be obtained by
subdividing edges of $G^{\prime}$ with new vertices. We say that a graph $%
G^{2}=(V,E^{\prime})$ is a \textit{square of a graph} $G=(V,E)$ if $%
E^{\prime}=\left\{ (u,w):(u,w)\in E\vee\left( (u,v)\in E\wedge(v,w)\in
E\right) \text{ for some }v\in V\right\} $. A two-terminal directed acyclic
graph (\textit{st-dag}) has only one source and only one sink.

We consider a \textit{labeled graph} which has labels attached to its edges.
Each path between the source and the sink (a \textit{sequential path}) in an
st-dag can be presented by a product of all edge labels of the path. We
define the sum of edge label products corresponding to all possible
sequential paths of an st-dag $G$ as the \textit{canonical expression }of $G$%
. An algebraic expression is called an \textit{st-dag expression} (a \textit{%
factoring of an st-dag} in \cite{BKS}) if it is algebraically equivalent to
the canonical expression of an st-dag. An st-dag expression consists of
literals (edge labels), and the operators $+$ (disjoint union) and $\cdot$
(concatenation, also denoted by juxtaposition). An expression of an st-dag $%
G $ will be hereafter denoted by $Ex(G)$.

We define the total number of literals in an algebraic expression as the 
\textit{complexity of the algebraic expression}. An equivalent expression
with the minimum complexity is called an \textit{optimal representation of
the algebraic expression}.

A \textit{series-parallel} \textit{graph} is defined recursively so that a
single edge is a series-parallel graph and a graph obtained by a parallel or
a series composition of series-parallel graphs is series-parallel. As shown
in \cite{BKS} and \cite{KoL}, a series-parallel graph expression has a
representation in which each literal appears only once. This representation
is an optimal representation of the series-parallel graph expression. For
example, the canonical expression of the series-parallel graph presented in
Figure \ref{fig1} is $abd+abe+acd+ace+fe+fd$. Since it is a series-parallel
graph, the expression can be reduced to $(a(b+c)+f)(d+e)$, where each
literal appears once.

\begin{figure}[tbp]
\setlength{\unitlength}{0.7cm}
\par
\par
\begin{picture}(5,3.5)(-3,0.3)\thicklines
\multiput(1,3)(3,0){4}{\circle*{0.15}}
\put(1,3){\vector(1,0){3}} \put(2.5,3.3){\makebox(0,0){$a$}}
\qbezier(4,3)(5.5,5)(7,3) \put(7.085,3){\vector(3,-2){0}}
\put(5.5,4.3){\makebox(0,0){$b$}}
\qbezier(4,3)(5.5,1)(7,3) \put(7.085,3){\vector(3,2){0}}
\put(5.5,2.3){\makebox(0,0){$c$}}
\qbezier(7,3)(8.5,5)(10,3) \put(10.085,3){\vector(3,-2){0}}
\put(8.5,4.3){\makebox(0,0){$d$}}
\qbezier(7,3)(8.5,1)(10,3) \put(10.085,3){\vector(3,2){0}}
\put(8.5,2.3){\makebox(0,0){$e$}}
\qbezier(1,3)(4,-2)(7,3) \put(7.085,3){\vector(4,3){0}}
\put(4,0.8){\makebox(0,0){$f$}}
\end{picture}
\caption{A series-parallel graph.}
\label{fig1}
\end{figure}
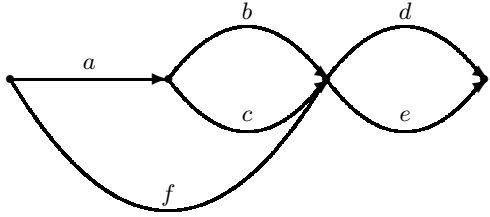

A \textit{Fibonacci graph} \cite{GoP} has vertices $\{1,2,3,\ldots ,n\}$ and
edges $\{\left( v,v+1\right) \mid v=1,2,\ldots ,n-1\}\cup \left\{ \left(
v,v+2\right) \mid v=1,2,\ldots ,n-2\right\} $. As shown in \cite{Duf}, an
st-dag is series-parallel if and only if it does not contain a subgraph
which is a homeomorph of the \textit{forbidden subgraph} positioned between
vertices $1$ and $4$ of the Fibonacci graph illustrated in Figure \ref{fig2}%
. Thus a Fibonacci graph gives a generic example of non-series-parallel
graphs. 
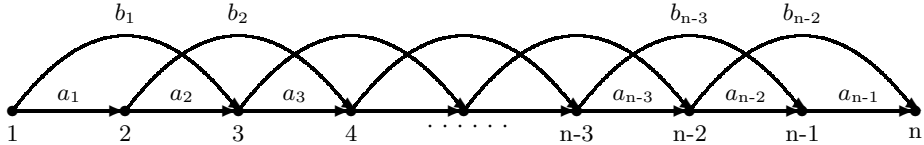
\begin{figure}[tbph]
\setlength{\unitlength}{1.0cm}
\par
\par
\begin{picture}(5,1.2)(0,-0.3)\thicklines
\multiput(0,0)(1.5,0){9}{\circle*{0.15}}
\put(0,-0.3){\makebox(0,0){1}}
\put(1.5,-0.3){\makebox(0,0){2}}
\put(3,-0.3){\makebox(0,0){3}}
\put(4.5,-0.3){\makebox(0,0){4}}
\put(7.5,-0.3){\makebox(0,0){n-3}}
\put(9,-0.3){\makebox(0,0){n-2}}
\put(10.5,-0.3){\makebox(0,0){n-1}}
\put(12,-0.3){\makebox(0,0){n}}
\multiput(0,0)(1.5,0){8}{\vector(1,0){1.5}}
\put(0.75,0.2){\makebox(0,0){$a_{1}$}}
\put(2.25,0.2){\makebox(0,0){$a_{2}$}}
\put(3.75,0.2){\makebox(0,0){$a_{3}$}}
\put(8.25,0.2){\makebox(0,0){$a_{\text{n-3}}$}}
\put(9.75,0.2){\makebox(0,0){$a_{\text{n-2}}$}}
\put(11.25,0.2){\makebox(0,0){$a_{\text{n-1}}$}}
\qbezier(0,0)(1.5,2)(3,0)
\qbezier(1.5,0)(3,2)(4.5,0)
\qbezier(3,0)(4.5,2)(6,0)
\qbezier(4.5,0)(6,2)(7.5,0)
\qbezier(6,0)(7.5,2)(9,0)
\qbezier(7.5,0)(9,2)(10.5,0)
\qbezier(9,0)(10.5,2)(12,0)
\multiput(3.085,0)(1.5,0){7}{\vector(3,-2){0}}
\put(1.5,1.3){\makebox(0,0){$b_{1}$}}
\put(3,1.3){\makebox(0,0){$b_{2}$}}
\put(9,1.3){\makebox(0,0){$b_{\text{n-3}}$}}
\put(10.5,1.3){\makebox(0,0){$b_{\text{n-2}}$}}
\multiput(5.55,-0.2)(0.2,0){6}{\circle*{0.02}}
\end{picture}
\caption{A Fibonacci graph.}
\label{fig2}
\end{figure}

Mutual relations between graphs and expressions are discussed in \cite{BKS}, 
\cite{GoM}, \cite{GMR}, \cite{KoL}, \cite{Mun1}, \cite{Mun2}, \cite{Nau}, 
\cite{SaW}, and other works. Specifically, \cite{Mun1}, \cite{Mun2}, and 
\cite{SaW} consider the correspondence between series-parallel graphs and
read-once functions. A Boolean function is defined as \textit{read-once} if
it may be computed by some formula in which no variable occurs more than
once (\textit{read-once formula}). On the other hand, a series-parallel
graph expression can be reduced to the representation in which each literal
appears only once. Hence, such a representation of a series-parallel graph
expression can be considered as a read-once formula (boolean operations are
replaced by arithmetic ones).

An expression of a homeomorph of the forbidden subgraph belonging to any
non-series-parallel st-dag has no representation in which each literal
appears once. For example, consider the subgraph positioned between vertices 
$1$ and $4$ of the Fibonacci graph shown in Figure \ref{fig2}. Possible
optimal representations of its expression are $a_{1}\left(
a_{2}a_{3}+b_{2}\right) +b_{1}a_{3}$ or $\left( a_{1}a_{2}+b_{1}\right)
a_{3}+a_{1}b_{2}$. For this reason, an expression of a non-series-parallel
st-dag can not be represented as a read-once formula. However, for arbitrary
functions, which are not read-once, generating the optimum factored form is
NP-complete \cite{Wan}. Some algorithms developed in order to obtain good
factored forms are described in \cite{GoM}, \cite{GMR} and other works. In 
\cite{KoL} we presented an algorithm,\ which generates the expression of $%
O\left( n^{2}\right) $ complexity for an $n$-vertex Fibonacci graph.

In \cite{KoL4} we considered a non-series-parallel st-dag called a \textit{%
square rhomboid} (Figure \ref{rhom_fig12}). This graph looks like a planar
approximation of the square of a \textit{rhomboid}, which is a series
composition of \textit{rhomb} graphs. A square rhomboid consists of the same
vertices as the corresponding rhomboid. However, edges labeled by letters $a$%
, $b$, and $c$ (see Figure \ref{rhom_fig12}) are absent in a rhomboid.
Geometrically, a square rhomboid ($SR$ for brevity) can be considered to be
a \textquotedblright gluing\textquotedblright\ of two Fibonacci graphs (the
upper one consists of edges labeled by $e$, $b$, $c$ and the lower one
consists of edges labeled by $d$, $b$, $a$), i.e., it is the next harder one
in a sequence of increasingly non-series-parallel graphs.

\begin{figure}[tbph]
\setlength{\unitlength}{0.8cm}
\par
\par
\begin{picture}(10,3)(1.7,-1.6)\thicklines
\multiput(0,0)(3,0){7}{\circle*{0.1875}}
\multiput(1.5,1.5)(3,0){6}{\circle*{0.1875}}
\multiput(1.5,-1.5)(3,0){6}{\circle*{0.1875}}
\put(0,0.35){\makebox(0,0){1}} \put(3.02,0.35){\makebox(0,0){2}}
\put(15,0.5){\makebox(0,0){n-1}} \put(18,0.35){\makebox(0,0){n}}
\put(1.5,1.8){\makebox(0,0){1}} \put(4.5,1.8){\makebox(0,0){2}}
\put(13.5,1.8){\makebox(0,0){n-2}}
\put(16.5,1.8){\makebox(0,0){n-1}}
\put(1.5,-1.8){\makebox(0,0){1}} \put(4.5,-1.8){\makebox(0,0){2}}
\put(13.5,-1.8){\makebox(0,0){n-2}}
\put(16.5,-1.8){\makebox(0,0){n-1}}
\multiput(0,0)(3,0){6}{\vector(1,1){1.5}}
\multiput(0,0)(3,0){6}{\vector(1,-1){1.5}}
\multiput(1.5,1.5)(3,0){6}{\vector(1,-1){1.5}}
\multiput(1.5,-1.5)(3,0){6}{\vector(1,1){1.5}}
\multiput(0,0)(3,0){6}{\vector(1,0){3}}
\multiput(1.5,1.5)(3,0){5}{\vector(1,0){3}}
\multiput(1.5,-1.5)(3,0){5}{\vector(1,0){3}}
\put(1.5,0.3){\makebox(0,0){$b_{1}$}}
\put(16.5,0.3){\makebox(0,0){$b_{\text{n-1}}$}}
\put(3,1.75){\makebox(0,0){$c_{1}$}}
\put(15,1.75){\makebox(0,0){$c_{\text{n-2}}$}}
\put(3,-1.75){\makebox(0,0){$a_{1}$}}
\put(15,-1.75){\makebox(0,0){$a_{\text{n-2}}$}}
\put(0.55,0.95){\makebox(0,0){$e_{1}$}}
\put(15.4,0.95){\makebox(0,0){$e_{\text{2n-3}}$}}
\put(2.4,0.95){\makebox(0,0){$e_{2}$}}
\put(17.56,0.95){\makebox(0,0){$e_{\text{2n-2}}$}}
\put(0.6,-0.95){\makebox(0,0){$d_{1}$}}
\put(15.4,-0.95){\makebox(0,0){$d_{\text{2n-3}}$}}
\put(2.4,-0.95){\makebox(0,0){$d_{2}$}}
\put(17.56,-0.95){\makebox(0,0){$d_{\text{2n-2}}$}}
\multiput(4,0.2)(0.2,0){6}{\circle*{0.02}}
\multiput(13,0.2)(0.2,0){6}{\circle*{0.02}}
\multiput(5.5,1.7)(0.2,0){6}{\circle*{0.02}}
\multiput(11.5,1.7)(0.2,0){6}{\circle*{0.02}}
\multiput(5.5,-1.7)(0.2,0){6}{\circle*{0.02}}
\multiput(11.5,-1.7)(0.2,0){6}{\circle*{0.02}}
\end{picture}
\caption{A square rhomboid of size $n$.}
\label{rhom_fig12}
\end{figure}
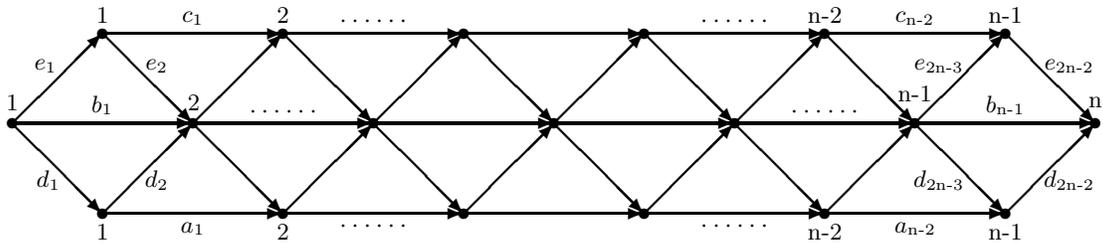

In this paper we investigate a more complicated graph called a \textit{full
square rhomboid} ($FSR$) which is a real square of a rhomboid and, in
addition to all edges of an $SR$, has edges labeled by $f$ and $g$ (Figure %
\ref{fsr_fig1}).

\begin{figure}[tbph]
\setlength{\unitlength}{0.8cm}
\par
\fontsize{8}{0}\selectfont%
\par
\begin{picture}(10,3)(1.7,-1.6)\thicklines
\multiput(0,0)(3,0){7}{\circle*{0.1875}}
\multiput(1.5,1.5)(3,0){6}{\circle*{0.1875}}
\multiput(1.5,-1.5)(3,0){6}{\circle*{0.1875}}
\put(0,0.35){\makebox(0,0){1}} \put(3.02,0.35){\makebox(0,0){2}}
\put(15,0.45){\makebox(0,0){n-1}} \put(18,0.35){\makebox(0,0){n}}
\put(1.5,1.8){\makebox(0,0){1}} \put(4.5,1.8){\makebox(0,0){2}}
\put(13.5,1.8){\makebox(0,0){n-2}}
\put(16.5,1.8){\makebox(0,0){n-1}} \put(1.5,-1.8){\makebox(0,0){1}}
\put(4.5,-1.8){\makebox(0,0){2}} \put(13.5,-1.8){\makebox(0,0){n-2}}
\put(16.5,-1.8){\makebox(0,0){n-1}}
\multiput(0,0)(3,0){6}{\vector(1,1){1.5}}
\multiput(0,0)(3,0){6}{\vector(1,-1){1.5}}
\multiput(1.5,1.5)(3,0){6}{\vector(1,-1){1.5}}
\multiput(1.5,-1.5)(3,0){6}{\vector(1,1){1.5}}
\qbezier(1.5,1.5)(1.5,0)(4.5,-1.5)
\qbezier(4.5,1.5)(4.5,0)(7.5,-1.5)
\qbezier(7.5,1.5)(7.5,0)(10.5,-1.5)
\qbezier(10.5,1.5)(10.5,0)(13.5,-1.5)
\qbezier(13.5,1.5)(13.5,0)(16.5,-1.5)
\multiput(4.5,-1.5)(3,0){5}{\vector(2,-1){0}}
\qbezier(1.5,-1.5)(4.5,0)(4.5,1.5)
\qbezier(4.5,-1.5)(7.5,0)(7.5,1.5)
\qbezier(7.5,-1.5)(10.5,0)(10.5,1.5)
\qbezier(10.5,-1.5)(13.5,0)(13.5,1.5)
\qbezier(13.5,-1.5)(16.5,0)(16.5,1.5)
\multiput(4.5,1.5)(3,0){5}{\vector(0,1){0}}
\multiput(0,0)(3,0){6}{\vector(1,0){3}}
\multiput(1.5,1.5)(3,0){5}{\vector(1,0){3}}
\multiput(1.5,-1.5)(3,0){5}{\vector(1,0){3}}
\put(1.5,0.25){\makebox(0,0){$b_{1}$}}
\put(16.6,0.25){\makebox(0,0){$b_{\text{n-1}}$}}
\put(3,1.75){\makebox(0,0){$c_{1}$}}
\put(15,1.75){\makebox(0,0){$c_{\text{n-2}}$}}
\put(3,-1.75){\makebox(0,0){$a_{1}$}}
\put(15,-1.75){\makebox(0,0){$a_{\text{n-2}}$}}
\put(0.55,0.95){\makebox(0,0){$e_{1}$}}
\put(15.4,0.95){\makebox(0,0){$e_{\text{2n-3}}$}}
\put(2.4,0.95){\makebox(0,0){$e_{2}$}}
\put(17.56,0.95){\makebox(0,0){$e_{\text{2n-2}}$}}
\put(0.6,-0.95){\makebox(0,0){$d_{1}$}}
\put(16.3,-0.75){\makebox(0,0){$d_{\text{2n-3}}$}}
\put(1.6,-0.95){\makebox(0,0){$d_{2}$}}
\put(17.56,-0.95){\makebox(0,0){$d_{\text{2n-2}}$}}
\put(2.55,-1.15){\makebox(0,0){$f_{1}$}}
\put(14.55,-1.2){\makebox(0,0){$f_{\text{n-2}}$}}
\put(3.45,-1.15){\makebox(0,0){$g_{1}$}}
\put(15.45,-1.2){\makebox(0,0){$g_{\text{n-2}}$}}
\multiput(4,-0.2)(0.2,0){6}{\circle*{0.02}}
\multiput(13,-0.2)(0.2,0){6}{\circle*{0.02}}
\multiput(5.5,1.7)(0.2,0){6}{\circle*{0.02}}
\multiput(11.5,1.7)(0.2,0){6}{\circle*{0.02}}
\multiput(5.5,-1.7)(0.2,0){6}{\circle*{0.02}}
\multiput(11.5,-1.7)(0.2,0){6}{\circle*{0.02}}
\end{picture}
\caption{A full square rhomboid of size $n$.}
\label{fsr_fig1}
\end{figure}
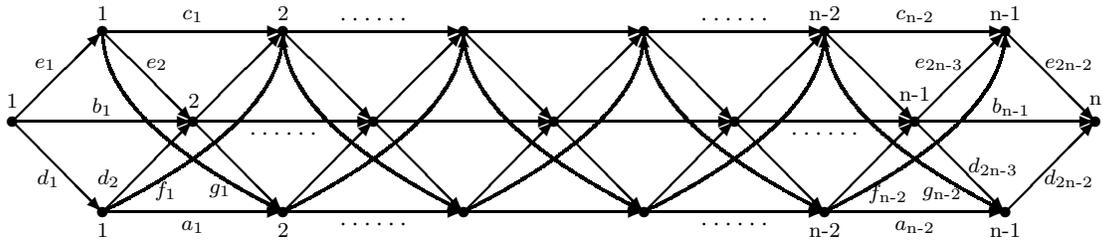

The set of vertices of $N$-vertex $SR$ and $FSR$ consists of $\frac{N+2}{3}$ 
\textit{middle} (\textit{basic}), $\frac{N-1}{3}$ \textit{upper}, and\textit{%
\ }$\frac{N-1}{3}$ \textit{lower} vertices. Upper and lower vertices
numbered $x$ will be denoted in formulae by $\overline{x}$ and $\underline{x}
$, respectively. $SR$ and $FSR$ including $n$ basic vertices will be denoted
by $SR(n)$ and $FSR(n)$, respectively, and will be called an $SR$ and an $%
FSR $ of size $n$.

Some algorithms which generate the expressions of $O\left( n^{\log
_{2}6}\right) $ complexity for $SR(n)$ are discussed in \cite{KoL4}. Our
intention in this paper is to generate and simplify the expressions of full
square rhomboids.

\section{Generating Expressions for Square Rhomboids}

The expressions of square rhomboids are generated using two-vertex
decomposition method (2-VDM) and one-vertex decomposition method (1-VDM) 
\cite{KoL4}. Both methods are based on revealing subgraphs in the initial
graph. The resulting expression is produced by a special composition of
subexpressions describing these subgraphs.

2-VDM is applied as follows. For a non-trivial $SR$ subgraph with a source $%
p $ and a sink $q$ we choose two \textit{decomposition vertices} one of
which belongs to the upper group and the other one belongs to the lower
group. These vertices have the same number $i$ chosen as $\frac{q+p-1}{2}$ ($%
\left\lceil \frac{q+p-1}{2}\right\rceil $ or $\left\lfloor \frac{q+p-1}{2}%
\right\rfloor $). We conditionally split each $SR$ through its decomposition
vertices (see the example in Figure \ref{rhom_fig3}).

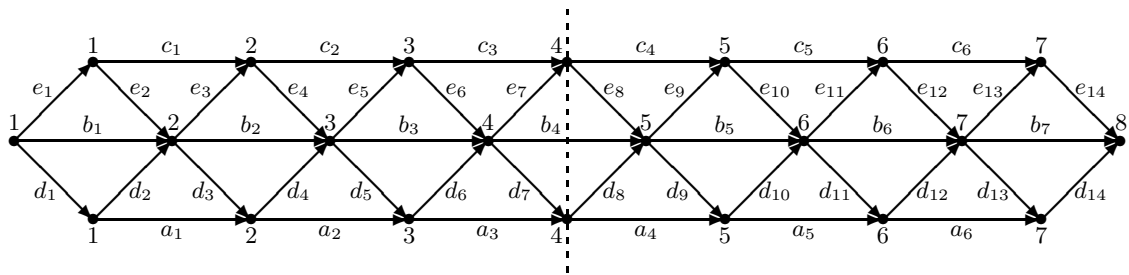
\begin{figure}[tbph]
\setlength{\unitlength}{0.7cm}
\par
\par
\begin{picture}(10,4.0)(2.0,-2.1)\thicklines
\multiput(0,0)(3,0){8}{\circle*{0.1875}}
\multiput(1.5,1.5)(3,0){7}{\circle*{0.1875}}
\multiput(1.5,-1.5)(3,0){7}{\circle*{0.1875}}
\put(0,0.35){\makebox(0,0){1}}
\put(3.02,0.35){\makebox(0,0){2}}
\put(6,0.35){\makebox(0,0){3}}
\put(9,0.35){\makebox(0,0){4}}
\put(12,0.35){\makebox(0,0){5}}
\put(15,0.35){\makebox(0,0){6}}
\put(18,0.35){\makebox(0,0){7}}
\put(21.02,0.35){\makebox(0,0){8}}
\put(1.5,1.8){\makebox(0,0){1}}
\put(4.5,1.8){\makebox(0,0){2}}
\put(7.5,1.8){\makebox(0,0){3}}
\put(10.3,1.8){\makebox(0,0){4}}
\put(13.5,1.8){\makebox(0,0){5}}
\put(16.5,1.8){\makebox(0,0){6}}
\put(19.5,1.8){\makebox(0,0){7}}
\put(1.5,-1.8){\makebox(0,0){1}}
\put(4.5,-1.8){\makebox(0,0){2}}
\put(7.5,-1.8){\makebox(0,0){3}}
\put(10.3,-1.8){\makebox(0,0){4}}
\put(13.5,-1.8){\makebox(0,0){5}}
\put(16.5,-1.8){\makebox(0,0){6}}
\put(19.5,-1.8){\makebox(0,0){7}}
\multiput(0,0)(3,0){7}{\vector(1,1){1.5}}
\multiput(0,0)(3,0){7}{\vector(1,-1){1.5}}
\multiput(1.5,1.5)(3,0){7}{\vector(1,-1){1.5}}
\multiput(1.5,-1.5)(3,0){7}{\vector(1,1){1.5}}
\multiput(0,0)(3,0){7}{\vector(1,0){3}}
\multiput(1.5,1.5)(3,0){6}{\vector(1,0){3}}
\multiput(1.5,-1.5)(3,0){6}{\vector(1,0){3}}
\put(1.5,0.3){\makebox(0,0){$b_{1}$}}
\put(4.5,0.3){\makebox(0,0){$b_{2}$}}
\put(7.5,0.3){\makebox(0,0){$b_{3}$}}
\put(10.2,0.3){\makebox(0,0){$b_{4}$}}
\put(13.5,0.3){\makebox(0,0){$b_{5}$}}
\put(16.5,0.3){\makebox(0,0){$b_{6}$}}
\put(19.5,0.3){\makebox(0,0){$b_{7}$}}
\put(3,1.75){\makebox(0,0){$c_{1}$}}
\put(6,1.75){\makebox(0,0){$c_{2}$}}
\put(9,1.75){\makebox(0,0){$c_{3}$}}
\put(12,1.75){\makebox(0,0){$c_{4}$}}
\put(15,1.75){\makebox(0,0){$c_{5}$}}
\put(18,1.75){\makebox(0,0){$c_{6}$}}
\put(3,-1.75){\makebox(0,0){$a_{1}$}}
\put(6,-1.75){\makebox(0,0){$a_{2}$}}
\put(9,-1.75){\makebox(0,0){$a_{3}$}}
\put(12,-1.75){\makebox(0,0){$a_{4}$}}
\put(15,-1.75){\makebox(0,0){$a_{5}$}}
\put(18,-1.75){\makebox(0,0){$a_{6}$}}
\put(0.55,0.95){\makebox(0,0){$e_{1}$}}
\put(3.55,0.95){\makebox(0,0){$e_{3}$}}
\put(6.55,0.95){\makebox(0,0){$e_{5}$}}
\put(9.55,0.95){\makebox(0,0){$e_{7}$}}
\put(12.55,0.95){\makebox(0,0){$e_{9}$}}
\put(15.5,0.95){\makebox(0,0){$e_{11}$}}
\put(18.5,0.95){\makebox(0,0){$e_{13}$}}
\put(2.4,0.95){\makebox(0,0){$e_{2}$}}
\put(5.4,0.95){\makebox(0,0){$e_{4}$}}
\put(8.4,0.95){\makebox(0,0){$e_{6}$}}
\put(11.4,0.95){\makebox(0,0){$e_{8}$}}
\put(14.45,0.95){\makebox(0,0){$e_{10}$}}
\put(17.45,0.95){\makebox(0,0){$e_{12}$}}
\put(20.45,0.95){\makebox(0,0){$e_{14}$}}
\put(0.6,-0.95){\makebox(0,0){$d_{1}$}}
\put(3.6,-0.95){\makebox(0,0){$d_{3}$}}
\put(6.6,-0.95){\makebox(0,0){$d_{5}$}}
\put(9.6,-0.95){\makebox(0,0){$d_{7}$}}
\put(12.6,-0.95){\makebox(0,0){$d_{9}$}}
\put(15.6,-0.95){\makebox(0,0){$d_{11}$}}
\put(18.6,-0.95){\makebox(0,0){$d_{13}$}}
\put(2.4,-0.95){\makebox(0,0){$d_{2}$}}
\put(5.4,-0.95){\makebox(0,0){$d_{4}$}}
\put(8.4,-0.95){\makebox(0,0){$d_{6}$}}
\put(11.4,-0.95){\makebox(0,0){$d_{8}$}}
\put(14.45,-0.95){\makebox(0,0){$d_{10}$}}
\put(17.45,-0.95){\makebox(0,0){$d_{12}$}}
\put(20.45,-0.95){\makebox(0,0){$d_{14}$}}
\dashline[75]{0.15}(10.5,2.5)(10.5,-2.5)
\end{picture}
\caption{Decomposition of a square rhomboid by 2-VDM.}
\label{rhom_fig3}
\end{figure}

Two kinds of subgraphs are revealed in the graph in the course of
decomposition. The first of them is an $SR$ with a fewer number of vertices
than the initial $SR$. The second one is an $SR$ supplemented by two
additional edges at one of four sides. Possible varieties of this st-dag (we
call it a\textit{\ single-leaf square rhomboid} and denote by $\widehat{SR}$%
) are four subgraphs of an $SR$ in Figure \ref{rhom_fig3} positioned between
vertices $1$ and $\overline{4}$, $\overline{4}$ and $8$, $1$ and $\underline{%
4}$, $\underline{4}$ and $8$. Let $\widehat{SR}(n)$ (an $\widehat{SR}$ of
size $n$) denote an $\widehat{SR}$ including $n$ basic vertices.

We denote by $E(p,q)$ a subexpression related to an $SR$ subgraph with a
source $p$ and a sink $q$. We denote by $E(p,\overline{q})$, $E(\overline {p}%
,q)$, $E(p,\underline{q})$, $E(\underline{p},q)$ subexpressions related to $%
\widehat{SR}$ subgraphs with a source $p$ and a sink $\overline{q}$, a
source $\overline{p}$ and a sink $q$, a source $p$ and a sink $\underline{q}$%
, and a source $\underline{p}$ and a sink $q$.

One can see that any path from vertex $1$ to vertex $8$ in Figure \ref%
{rhom_fig3} passes either through one of decomposition vertices ($\overline{4%
}$ or $\underline{4}$) or through edge $b_{4}$. Therefore, in the general
case a current subgraph is decomposed into six new subgraphs and%
\begin{equation}
E(p,q)\leftarrow E(p,i)b_{i}E(i+1,q)+E(p,\overline{i})E(\overline {i},q)+E(p,%
\underline{i})E(\underline{i},q).  \label{line1}
\end{equation}

Subgraphs described by subexpressions $E(p,i)$ and $E(i+1,q)$ include all
paths from vertex $p$ to vertex $q$ passing through edge $b_{i}$. Subgraphs
described by subexpressions $E(p,\overline{i})$ and $E(\overline{i},q)$
include all paths from vertex $p$ to vertex $q$ passing via vertex $%
\overline{i}$. Subgraphs described by subexpressions $E(p,\underline{i})$
and $E(\underline{i},q)$ include all paths from vertex $p$ to vertex $q$
passing via vertex $\underline{i}$.

An $\widehat{SR}$ subgraph is decomposed into six new subgraphs in the same
way as an $SR$ (see the example in Figure \ref{rhom_fig5}). Two
decomposition vertices (one from the upper and one from the lower group of
vertices) with the same absolute ordinal numbers are selected in the $%
\widehat{SR}$. These vertices are chosen so that the location of the split
is in the middle of the subgraph.

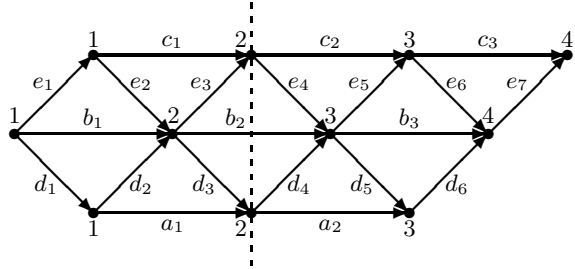
\begin{figure}[tbph]
\setlength{\unitlength}{0.7cm}
\par
\par
\begin{picture}(10,4)(-3.0,-2.0)\thicklines
\multiput(0,0)(3,0){4}{\circle*{0.1875}}
\multiput(1.5,1.5)(3,0){4}{\circle*{0.1875}}
\multiput(1.5,-1.5)(3,0){3}{\circle*{0.1875}}
\put(0,0.35){\makebox(0,0){1}}
\put(3.02,0.35){\makebox(0,0){2}}
\put(6,0.35){\makebox(0,0){3}}
\put(9,0.35){\makebox(0,0){4}}
\put(1.5,1.8){\makebox(0,0){1}}
\put(4.3,1.8){\makebox(0,0){2}}
\put(7.5,1.8){\makebox(0,0){3}}
\put(10.5,1.8){\makebox(0,0){4}}
\put(1.5,-1.8){\makebox(0,0){1}}
\put(4.3,-1.8){\makebox(0,0){2}}
\put(7.5,-1.8){\makebox(0,0){3}}
\multiput(0,0)(3,0){4}{\vector(1,1){1.5}}
\multiput(0,0)(3,0){3}{\vector(1,-1){1.5}}
\multiput(1.5,1.5)(3,0){3}{\vector(1,-1){1.5}}
\multiput(1.5,-1.5)(3,0){3}{\vector(1,1){1.5}}
\multiput(0,0)(3,0){3}{\vector(1,0){3}}
\multiput(1.5,1.5)(3,0){3}{\vector(1,0){3}}
\multiput(1.5,-1.5)(3,0){2}{\vector(1,0){3}}
\put(1.5,0.3){\makebox(0,0){$b_{1}$}}
\put(4.2,0.3){\makebox(0,0){$b_{2}$}}
\put(7.5,0.3){\makebox(0,0){$b_{3}$}}
\put(3,1.75){\makebox(0,0){$c_{1}$}}
\put(6,1.75){\makebox(0,0){$c_{2}$}}
\put(9,1.75){\makebox(0,0){$c_{3}$}}
\put(3,-1.75){\makebox(0,0){$a_{1}$}}
\put(6,-1.75){\makebox(0,0){$a_{2}$}}
\put(0.55,0.95){\makebox(0,0){$e_{1}$}}
\put(3.55,0.95){\makebox(0,0){$e_{3}$}}
\put(6.55,0.95){\makebox(0,0){$e_{5}$}}
\put(9.55,0.95){\makebox(0,0){$e_{7}$}}
\put(2.4,0.95){\makebox(0,0){$e_{2}$}}
\put(5.4,0.95){\makebox(0,0){$e_{4}$}}
\put(8.4,0.95){\makebox(0,0){$e_{6}$}}
\put(0.6,-0.95){\makebox(0,0){$d_{1}$}}
\put(3.6,-0.95){\makebox(0,0){$d_{3}$}}
\put(6.6,-0.95){\makebox(0,0){$d_{5}$}}
\put(2.4,-0.95){\makebox(0,0){$d_{2}$}}
\put(5.4,-0.95){\makebox(0,0){$d_{4}$}}
\put(8.4,-0.95){\makebox(0,0){$d_{6}$}}
\dashline[75]{0.15}(4.5,2.5)(4.5,-2.5)
\end{picture}
\caption{Decomposition of a single-leaf square rhomboid by 2-VDM.}
\label{rhom_fig5}
\end{figure}

Three kinds of subgraphs are revealed in an $\widehat{SR}$ in the course of
decomposition. The first and the second of them are an $SR$ and an $\widehat{%
SR}$, respectively. The third one is an $SR$ supplemented by two additional
pairs of edges (one pair is on the left and another one is on the right).
Possible varieties of this st-dag (we call it a\textit{\ dipterous square
rhomboid} and denote it by $\widehat{\widehat{SR}}$) are illustrated in
Figure \ref{rhom_fig7}(a) (a \textit{parallelogram} $\widehat{\widehat{SR}}$ 
\textit{graph}) and Figure \ref{rhom_fig7}(b) (a \textit{trapezoidal} $%
\widehat{\widehat{SR}}$ \textit{graph}). Let $\widehat{\widehat{SR}}(n)$ (an 
$\widehat{\widehat{SR}}$ of size $n$) denote an $\widehat{\widehat{SR}}$
including $n$ basic vertices.

\begin{figure}[tbph]
\setlength{\unitlength}{0.7cm}
\par
\begin{picture}(10,3.5)(-2.5,-1.5)\thicklines
\put(16.5,0){\makebox(0,0){(b)}}
\multiput(9,0)(3,0){3}{%
\circle*{0.1875}}
\multiput(10.5,1.5)(3,0){3}{\circle*{0.1875}}
\multiput(7.5,-1.5)(3,0){3}{\circle*{0.1875}}
\put(9,0.35){\makebox(0,0){5}}
\put(12,0.35){\makebox(0,0){6}}
\put(15,0.35){\makebox(0,0){7}}
\put(10.3,1.8){\makebox(0,0){5}}
\put(13.5,1.8){\makebox(0,0){6}}
\put(16.5,1.8){\makebox(0,0){7}}
\put(7.5,-1.8){\makebox(0,0){4}}
\put(10.3,-1.8){\makebox(0,0){5}}
\put(13.5,-1.8){\makebox(0,0){6}}
\multiput(9,0)(3,0){3}{\vector(1,1){1.5}}
\multiput(9,0)(3,0){2}{\vector(1,-1){1.5}}
\multiput(10.5,1.5)(3,0){2}{%
\vector(1,-1){1.5}}
\multiput(7.5,-1.5)(3,0){3}{\vector(1,1){1.5}}
\multiput(9,0)(3,0){2}{\vector(1,0){3}}
\multiput(10.5,1.5)(3,0){2}{%
\vector(1,0){3}}
\multiput(7.5,-1.5)(3,0){2}{\vector(1,0){3}}
\put(10.2,0.3){\makebox(0,0){$b_{5}$}}
\put(13.5,0.3){\makebox(0,0){$b_{6}$}}
\put(11.7,1.75){\makebox(0,0){$c_{5}$}}
\put(15,1.75){\makebox(0,0){$c_{6}$}}
\put(9,-1.75){\makebox(0,0){$a_{4}$}}
\put(11.7,-1.75){\makebox(0,0){$a_{5}$}}
\put(9.55,0.95){\makebox(0,0){$e_{9}$}}
\put(12.55,0.95){\makebox(0,0){$e_{11}$}}
\put(15.5,0.95){\makebox(0,0){$e_{13}$}}
\put(11.4,0.95){\makebox(0,0){$e_{10}$}}
\put(14.45,0.95){\makebox(0,0){$e_{12}$}}
\put(9.6,-0.95){\makebox(0,0){$d_{9}$}}
\put(12.6,-0.95){\makebox(0,0){$d_{11}$}}
\put(8.4,-0.95){\makebox(0,0){$d_{8}$}}
\put(11.4,-0.95){\makebox(0,0){$d_{10}$}}
\put(14.45,-0.95){\makebox(0,0){$d_{12}$}}
%
%
\dashline[75]{0.15}(10.5,2.5)(10.5,-2.5)
\end{picture}
\par
\begin{picture}(10,0)(11.7,-2.2)\thicklines
\put(10.5,0){\makebox(0,0){(a)}}
\multiput(12,0)(3,0){4}{\circle*{0.1875}}
\multiput(10.5,1.5)(3,0){5}{\circle*{0.1875}}
\multiput(13.5,-1.5)(3,0){3}{%
\circle*{0.1875}}
\put(12,0.35){\makebox(0,0){6}}
\put(15,0.35){%
\makebox(0,0){7}}
\put(18,0.35){\makebox(0,0){8}}
\put(21.02,0.35){%
\makebox(0,0){9}}
\put(10.5,1.8){\makebox(0,0){5}}
\put(13.5,1.8){%
\makebox(0,0){6}}
\put(16.3,1.8){\makebox(0,0){7}}
\put(19.5,1.8){%
\makebox(0,0){8}}
\put(22.5,1.8){\makebox(0,0){9}}
\put(13.5,-1.8){%
\makebox(0,0){6}}
\put(16.3,-1.8){\makebox(0,0){7}}
\put(19.5,-1.8){%
\makebox(0,0){8}}
\multiput(12,0)(3,0){4}{\vector(1,1){1.5}}
\multiput(12,0)(3,0){3}{\vector(1,-1){1.5}}
\multiput(10.5,1.5)(3,0){4}{%
\vector(1,-1){1.5}}
\multiput(13.5,-1.5)(3,0){3}{\vector(1,1){1.5}}
\multiput(12,0)(3,0){3}{\vector(1,0){3}}
\multiput(10.5,1.5)(3,0){4}{%
\vector(1,0){3}}
\multiput(13.5,-1.5)(3,0){2}{\vector(1,0){3}}
\put(13.5,0.3){\makebox(0,0){$b_{6}$}}
\put(16.5,0.3){%
\makebox(-0.4,0){$b_{7}$}}
\put(19.5,0.3){\makebox(0,0){$b_{8}$}}
\put(12,1.75){\makebox(0,0){$c_{5}$}}
\put(15,1.75){%
\makebox(0,0){$c_{6}$}}
\put(17.7,1.75){\makebox(0,0){$c_{7}$}}
\put(21,1.75){\makebox(0,0){$c_{8}$}}
\put(15,-1.75){%
\makebox(0,0){$a_{6}$}}
\put(17.7,-1.75){\makebox(0,0){$a_{7}$}}
\put(12.55,0.95){\makebox(0,0){$e_{11}$}}
\put(15.5,0.95){%
\makebox(0,0){$e_{13}$}}
\put(18.5,0.95){\makebox(0,0){$e_{15}$}}
\put(21.5,0.95){\makebox(0,0){$e_{17}$}}
\put(11.45,0.95){%
\makebox(0,0){$e_{10}$}}
\put(14.45,0.95){%
\makebox(0,0){$e_{12}$}}
\put(17.45,0.95){\makebox(0,0){$e_{14}$}}
\put(20.45,0.95){\makebox(0,0){$e_{16}$}}
\put(12.6,-0.95){%
\makebox(0,0){$d_{11}$}}
\put(15.6,-0.95){\makebox(0,0){$d_{13}$}}
\put(18.6,-0.95){\makebox(0,0){$d_{15}$}}
\put(14.45,-0.95){%
\makebox(0,0){$d_{12}$}}
\put(17.45,-0.95){\makebox(0,0){$d_{14}$}}
\put(20.45,-0.95){\makebox(0,0){$d_{16}$}}
%
%
\dashline[75]{0.15}(16.5,2.5)(16.5,-2.5)
\end{picture}
\caption{Decomposition of dipterous square rhomboids by 2-VDM.}
\label{rhom_fig7}
\end{figure}
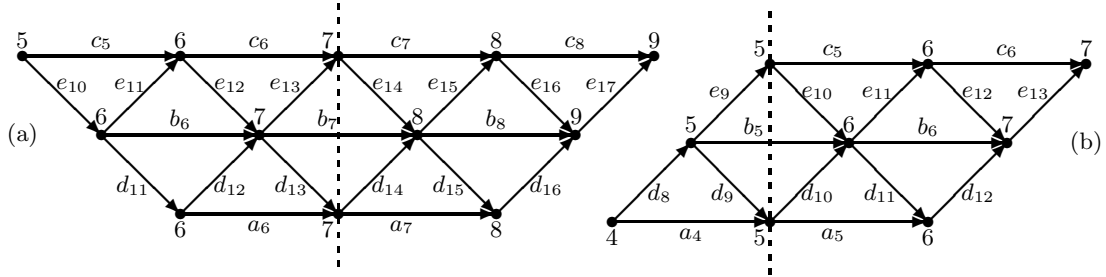

An $\widehat{\widehat{SR}}$ subgraph is decomposed into six new subgraphs in
the same way as an $SR$ and an $\widehat{SR}$ (see examples in Figure \ref%
{rhom_fig7}(a, b)). The number $i$ of the upper and the lower decomposition
vertices for a current $\widehat{\widehat{SR}}$ subgraph positioned between
vertices $p$ and $q$, is chosen as $\frac{q+p}{2}$ ($\left\lceil \frac{q+p}{2%
}\right\rceil $ or $\left\lfloor \frac{q+p}{2}\right\rfloor $). In the
course of decomposition, two kinds of subgraphs are revealed in an $\widehat{%
\widehat{SR}}$. They are an $\widehat{SR}$ and an $\widehat{\widehat{SR}}$.

1-VDM consists in splitting a non-trivial $SR$ with a source $p$ and a sink $%
q$ through one decomposition vertex $i$ located in the basic group of the
subgraph. The number $i$ is chosen as $\frac{q+p}{2}$ ($\left\lceil \frac {%
q+p}{2}\right\rceil $ or $\left\lfloor \frac{q+p}{2}\right\rfloor $) - see
the example in Figure \ref{rhom_fig31}.

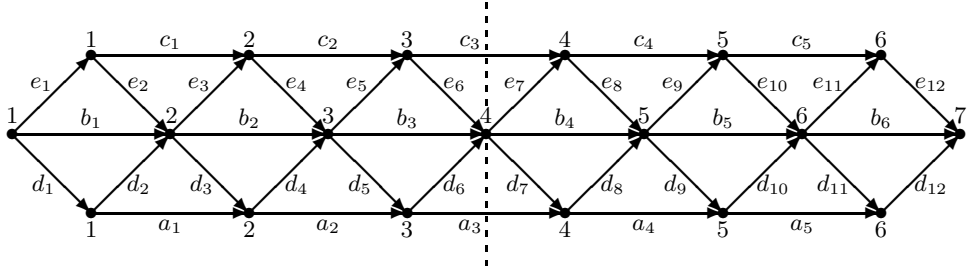
\begin{figure}[h]
\setlength{\unitlength}{0.7cm}
\par
\par
\par
\begin{picture}(10,4.0)(0.5,-2.1)\thicklines
\multiput(0,0)(3,0){7}{\circle*{0.1875}}
\multiput(1.5,1.5)(3,0){6}{\circle*{0.1875}}
\multiput(1.5,-1.5)(3,0){6}{\circle*{0.1875}}
\put(0,0.35){\makebox(0,0){1}} \put(3.02,0.35){\makebox(0,0){2}}
\put(6,0.35){\makebox(0,0){3}} \put(9,0.35){\makebox(0,0){4}}
\put(12,0.35){\makebox(0,0){5}} \put(15,0.35){\makebox(0,0){6}}
\put(18,0.35){\makebox(0,0){7}}
\put(1.5,1.8){\makebox(0,0){1}} \put(4.5,1.8){\makebox(0,0){2}}
\put(7.5,1.8){\makebox(0,0){3}} \put(10.5,1.8){\makebox(0,0){4}}
\put(13.5,1.8){\makebox(0,0){5}} \put(16.5,1.8){\makebox(0,0){6}}
\put(1.5,-1.8){\makebox(0,0){1}} \put(4.5,-1.8){\makebox(0,0){2}}
\put(7.5,-1.8){\makebox(0,0){3}} \put(10.5,-1.8){\makebox(0,0){4}}
\put(13.5,-1.8){\makebox(0,0){5}} \put(16.5,-1.8){\makebox(0,0){6}}
\multiput(0,0)(3,0){6}{\vector(1,1){1.5}}
\multiput(0,0)(3,0){6}{\vector(1,-1){1.5}}
\multiput(1.5,1.5)(3,0){6}{\vector(1,-1){1.5}}
\multiput(1.5,-1.5)(3,0){6}{\vector(1,1){1.5}}
\multiput(0,0)(3,0){6}{\vector(1,0){3}}
\multiput(1.5,1.5)(3,0){5}{\vector(1,0){3}}
\multiput(1.5,-1.5)(3,0){5}{\vector(1,0){3}}
\put(1.5,0.3){\makebox(0,0){$b_{1}$}}
\put(4.5,0.3){\makebox(0,0){$b_{2}$}}
\put(7.5,0.3){\makebox(0,0){$b_{3}$}}
\put(10.5,0.3){\makebox(0,0){$b_{4}$}}
\put(13.5,0.3){\makebox(0,0){$b_{5}$}}
\put(16.5,0.3){\makebox(0,0){$b_{6}$}}
\put(3,1.75){\makebox(0,0){$c_{1}$}}
\put(6,1.75){\makebox(0,0){$c_{2}$}}
\put(8.7,1.75){\makebox(0,0){$c_{3}$}}
\put(12,1.75){\makebox(0,0){$c_{4}$}}
\put(15,1.75){\makebox(0,0){$c_{5}$}}
\put(3,-1.75){\makebox(0,0){$a_{1}$}}
\put(6,-1.75){\makebox(0,0){$a_{2}$}}
\put(8.7,-1.75){\makebox(0,0){$a_{3}$}}
\put(12,-1.75){\makebox(0,0){$a_{4}$}}
\put(15,-1.75){\makebox(0,0){$a_{5}$}}
\put(0.55,0.95){\makebox(0,0){$e_{1}$}}
\put(3.55,0.95){\makebox(0,0){$e_{3}$}}
\put(6.55,0.95){\makebox(0,0){$e_{5}$}}
\put(9.55,0.95){\makebox(0,0){$e_{7}$}}
\put(12.55,0.95){\makebox(0,0){$e_{9}$}}
\put(15.5,0.95){\makebox(0,0){$e_{11}$}}
\put(2.4,0.95){\makebox(0,0){$e_{2}$}}
\put(5.4,0.95){\makebox(0,0){$e_{4}$}}
\put(8.4,0.95){\makebox(0,0){$e_{6}$}}
\put(11.4,0.95){\makebox(0,0){$e_{8}$}}
\put(14.45,0.95){\makebox(0,0){$e_{10}$}}
\put(17.45,0.95){\makebox(0,0){$e_{12}$}}
\put(0.6,-0.95){\makebox(0,0){$d_{1}$}}
\put(3.6,-0.95){\makebox(0,0){$d_{3}$}}
\put(6.6,-0.95){\makebox(0,0){$d_{5}$}}
\put(9.6,-0.95){\makebox(0,0){$d_{7}$}}
\put(12.6,-0.95){\makebox(0,0){$d_{9}$}}
\put(15.6,-0.95){\makebox(0,0){$d_{11}$}}
\put(2.4,-0.95){\makebox(0,0){$d_{2}$}}
\put(5.4,-0.95){\makebox(0,0){$d_{4}$}}
\put(8.4,-0.95){\makebox(0,0){$d_{6}$}}
\put(11.4,-0.95){\makebox(0,0){$d_{8}$}}
\put(14.45,-0.95){\makebox(0,0){$d_{10}$}}
\put(17.45,-0.95){\makebox(0,0){$d_{12}$}}
\dashline[75]{0.15}(9,2.5)(9,0.55)
\dashline[75]{0.15}(9,0.1)(9,-2.5)
\end{picture}
\caption{Decomposition of a square rhomboid by 1-VDM.}
\label{rhom_fig31}
\end{figure}

As for 2-VDM, two $SR$ subgraphs and four $\widehat{SR}$ subgraphs are
revealed in the course of decomposition. Any path from vertex $1$ to vertex $%
7$ in Figure \ref{rhom_fig31} passes through decomposition vertex $4$ or
through edge $c_{3}$ or through edge $a_{3}$. Therefore, in the general case
a current subgraph is decomposed into six new subgraphs and%
\begin{equation}
E(p,q)\leftarrow E(p,i)E(i,q)+E(p,\overline{i-1})c_{i-1}E(\overline{i}%
,q)+E(p,\underline{i-1})a_{i-1}E(\underline{i},q).  \label{line2}
\end{equation}

Subgraphs described by subexpressions $E(p,i)$ and $E(i,q)$ include all
paths from vertex $p$ to vertex $q$ passing through vertex $i$. Subgraphs
described by subexpressions $E(p,\overline{i-1})$ and $E(\overline{i},q)$
include all paths from vertex $p$ to vertex $q$ passing via edge $c_{i-1}$.
Subgraphs described by subexpressions $E(p,\underline{i-1})$ and $E(%
\underline{i},q)$ include all paths from vertex $p$ to vertex $q$ passing
via edge $a_{i-1}$.

An $\widehat{SR}$ subgraph is decomposed by 1-VDM through a decomposition
vertex selected in its basic group into six new subgraphs in a similar way
to 2-VDM. The decomposition vertex is chosen so that the location of the
split is in the middle of the subgraph. The decomposition also gives one $SR$
subgraph, three $\widehat{SR}$ subgraphs, and two $\widehat{\widehat{SR}}$
subgraphs.

Finally, an $\widehat{\widehat{SR}}$ subgraph is also decomposed into two $%
\widehat{SR}$ subgraphs and four $\widehat{\widehat{SR}}$ subgraphs. The
number $i$ of the decomposition vertex in the basic group for a current $%
\widehat{\widehat{SR}}$ subgraph is chosen as $\frac{q+p+1}{2}$ ($%
\left\lceil \frac{q+p+1}{2}\right\rceil $ or $\left\lfloor \frac{q+p+1}{2}%
\right\rfloor $).

Thus by the master theorem \cite{CLR}, the total number of literals $T(n)$
in expressions $Ex(SR(n))$ derived by 2-VDM and 1-VDM is $O\left( n^{\log
_{2}6}\right) $.

However, numerically 1-VDM is more efficient than 2-VDM \cite{KoL4}. It
follows from discussed in \cite{KoL4}\ explicit formulae for $T(n)$ as well (%
$n=2^{k}$ for some positive integer $k\geq 2$). The coefficient of the
leading term of the formulae - $n^{\log _{2}6}$ is equal to $\frac{212}{135}%
\approx 1.57$ for the best algorithm based on 2-VDM and $\frac{154}{135}%
\approx 1.14$ for 1-VDM.

\section{Generating Expressions for Full Square Rhomboids}

Now, we attempt to apply 2-VDM and 1-VDM to a full square rhomboid.

Analogously to graphs mentioned in the previous section, we define \textit{%
single-leaf full square rhomboid} of size $n$ denoted by $\widehat{FSR}(n)$
and \textit{dipterous full square rhomboids} (\textit{trapezoidal} and 
\textit{parallelogram}) of size $n$ denoted by $\widehat{\widehat{FSR}}(n)$.
These graphs, in addition to all edges in corresponding $\widehat{SR}$ and $%
\widehat{\widehat{SR}}$ graphs, have edges labeled by $f$ and $g$ (as in
Figure \ref{fsr_fig1}).

We denote by $E(p,q)$ a subexpression related to an $FSR$ subgraph with a
source $p$ and a sink $q$. We denote by $E(p,\overline{q})$, $E(\overline {p}%
,q)$, $E(p,\underline{q})$, $E(\underline{p},q)$ subexpressions related to $%
\widehat{FSR}$ subgraphs with a source $p$ and a sink $\overline{q}$, a
source $\overline{p}$ and a sink $q$, a source $p$ and a sink $\underline{q}$%
, and a source $\underline{p}$ and a sink $q$. We denote by $E(\overline {p},%
\overline{q})$, $E(\overline{p},\underline{q})$, $E(\underline {p},\overline{%
q})$, $E(\underline{p},\underline{q})$ subexpressions related to $\widehat{%
\widehat{FSR}}$ subgraphs with a source $\overline{p}$ and a sink $\overline{%
q}$, a source $\overline{p}$ and a sink $\underline{q}$, a source $%
\underline{p}$ and a sink $\overline{q}$, and a source $\underline{p}$ and a
sink $\underline{q}$, respectively.

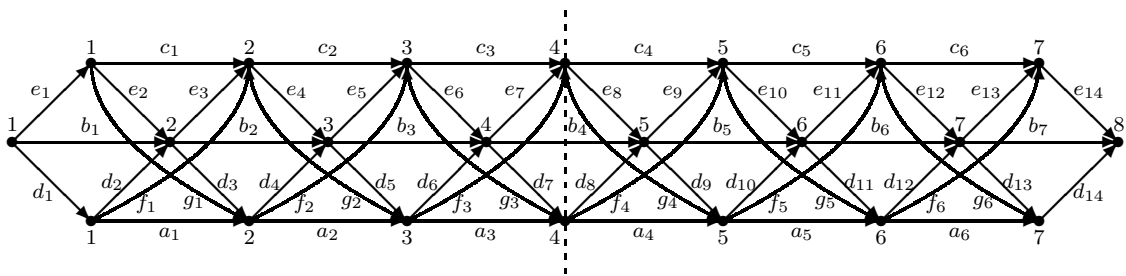
\begin{figure}[tbph]
\setlength{\unitlength}{0.7cm}
\par
\fontsize{8}{0}\selectfont%
\par
\par
\begin{picture}(10,4.0)(2.0,-2.3)\thicklines
\multiput(0,0)(3,0){8}{\circle*{0.1875}}
\multiput(1.5,1.5)(3,0){7}{\circle*{0.1875}}
\multiput(1.5,-1.5)(3,0){7}{\circle*{0.1875}}
\put(0,0.35){\makebox(0,0){1}}
\put(3.02,0.35){\makebox(0,0){2}}
\put(6,0.35){\makebox(0,0){3}}
\put(9,0.35){\makebox(0,0){4}}
\put(12,0.35){\makebox(0,0){5}}
\put(15,0.35){\makebox(0,0){6}}
\put(18,0.35){\makebox(0,0){7}}
\put(21.02,0.35){\makebox(0,0){8}}
\put(1.5,1.8){\makebox(0,0){1}}
\put(4.5,1.8){\makebox(0,0){2}}
\put(7.5,1.8){\makebox(0,0){3}}
\put(10.3,1.8){\makebox(0,0){4}}
\put(13.5,1.8){\makebox(0,0){5}}
\put(16.5,1.8){\makebox(0,0){6}}
\put(19.5,1.8){\makebox(0,0){7}}
\put(1.5,-1.8){\makebox(0,0){1}}
\put(4.5,-1.8){\makebox(0,0){2}}
\put(7.5,-1.8){\makebox(0,0){3}}
\put(10.3,-1.8){\makebox(0,0){4}}
\put(13.5,-1.8){\makebox(0,0){5}}
\put(16.5,-1.8){\makebox(0,0){6}}
\put(19.5,-1.8){\makebox(0,0){7}}
\multiput(0,0)(3,0){7}{\vector(1,1){1.5}}
\multiput(0,0)(3,0){7}{\vector(1,-1){1.5}}
\multiput(1.5,1.5)(3,0){7}{\vector(1,-1){1.5}}
\multiput(1.5,-1.5)(3,0){7}{\vector(1,1){1.5}}
\multiput(0,0)(3,0){7}{\vector(1,0){3}}
\multiput(1.5,1.5)(3,0){6}{\vector(1,0){3}}
\multiput(1.5,-1.5)(3,0){6}{\vector(1,0){3}}
\put(1.5,0.3){\makebox(0,0){$b_{1}$}}
\put(4.5,0.3){\makebox(0,0){$b_{2}$}}
\put(7.5,0.3){\makebox(0,0){$b_{3}$}}
\put(10.75,0.3){\makebox(0,0){$b_{4}$}} %
\put(13.5,0.3){\makebox(0,0){$b_{5}$}}
\put(16.5,0.3){\makebox(0,0){$b_{6}$}}
\put(19.5,0.3){\makebox(0,0){$b_{7}$}}
\put(3,1.75){\makebox(0,0){$c_{1}$}}
\put(6,1.75){\makebox(0,0){$c_{2}$}}
\put(9,1.75){\makebox(0,0){$c_{3}$}}
\put(12,1.75){\makebox(0,0){$c_{4}$}}
\put(15,1.75){\makebox(0,0){$c_{5}$}}
\put(18,1.75){\makebox(0,0){$c_{6}$}}
\put(3,-1.75){\makebox(0,0){$a_{1}$}}
\put(6,-1.75){\makebox(0,0){$a_{2}$}}
\put(9,-1.75){\makebox(0,0){$a_{3}$}}
\put(12,-1.75){\makebox(0,0){$a_{4}$}}
\put(15,-1.75){\makebox(0,0){$a_{5}$}}
\put(18,-1.75){\makebox(0,0){$a_{6}$}}
\put(0.55,0.95){\makebox(0,0){$e_{1}$}}
\put(3.55,0.95){\makebox(0,0){$e_{3}$}}
\put(6.55,0.95){\makebox(0,0){$e_{5}$}}
\put(9.55,0.95){\makebox(0,0){$e_{7}$}}
\put(12.55,0.95){\makebox(0,0){$e_{9}$}}
\put(15.5,0.95){\makebox(0,0){$e_{11}$}}
\put(18.5,0.95){\makebox(0,0){$e_{13}$}}
\put(2.4,0.95){\makebox(0,0){$e_{2}$}}
\put(5.4,0.95){\makebox(0,0){$e_{4}$}}
\put(8.4,0.95){\makebox(0,0){$e_{6}$}}
\put(11.4,0.95){\makebox(0,0){$e_{8}$}}
\put(14.45,0.95){\makebox(0,0){$e_{10}$}}
\put(17.45,0.95){\makebox(0,0){$e_{12}$}}
\put(20.45,0.95){\makebox(0,0){$e_{14}$}}
\put(0.6,-0.95){\makebox(0,0){$d_{1}$}}
\put(4.1,-0.75){\makebox(0,0){$d_{3}$}}
\put(7.1,-0.75){\makebox(0,0){$d_{5}$}}
\put(10.1,-0.75){\makebox(0,0){$d_{7}$}}
\put(13.1,-0.75){\makebox(0,0){$d_{9}$}}
\put(16.1,-0.75){\makebox(0,0){$d_{11}$}}
\put(19.1,-0.75){\makebox(0,0){$d_{13}$}}
\put(1.9,-0.75){\makebox(0,0){$d_{2}$}}
\put(4.9,-0.75){\makebox(0,0){$d_{4}$}}
\put(7.9,-0.75){\makebox(0,0){$d_{6}$}}
\put(10.9,-0.75){\makebox(0,0){$d_{8}$}}
\put(13.85,-0.75){\makebox(0,0){$d_{10}$}}
\put(16.85,-0.75){\makebox(0,0){$d_{12}$}}
\put(20.45,-0.95){\makebox(0,0){$d_{14}$}}
\put(2.55,-1.18){\makebox(0,0){$f_{1}$}}
\put(5.55,-1.18){\makebox(0,0){$f_{2}$}}
\put(8.55,-1.18){\makebox(0,0){$f_{3}$}}
\put(11.55,-1.18){\makebox(0,0){$f_{4}$}}
\put(14.55,-1.18){\makebox(0,0){$f_{5}$}}
\put(17.55,-1.18){\makebox(0,0){$f_{6}$}}
\put(3.45,-1.15){\makebox(0,0){$g_{1}$}}
\put(6.45,-1.15){\makebox(0,0){$g_{2}$}}
\put(9.45,-1.15){\makebox(0,0){$g_{3}$}}
\put(12.45,-1.15){\makebox(0,0){$g_{4}$}}
\put(15.45,-1.15){\makebox(0,0){$g_{5}$}}
\put(18.45,-1.15){\makebox(0,0){$g_{6}$}}
\qbezier(1.5,1.5)(1.5,0)(4.5,-1.5)
\qbezier(4.5,1.5)(4.5,0)(7.5,-1.5)
\qbezier(7.5,1.5)(7.5,0)(10.5,-1.5)
\qbezier(10.5,1.5)(10.5,0)(13.5,-1.5)
\qbezier(13.5,1.5)(13.5,0)(16.5,-1.5)
\qbezier(16.5,1.5)(16.5,0)(19.5,-1.5)
\multiput(4.5,-1.5)(3,0){6}{\vector(2,-1){0}}
\qbezier(1.5,-1.5)(4.5,0)(4.5,1.5)
\qbezier(4.5,-1.5)(7.5,0)(7.5,1.5)
\qbezier(7.5,-1.5)(10.5,0)(10.5,1.5)
\qbezier(10.5,-1.5)(13.5,0)(13.5,1.5)
\qbezier(13.5,-1.5)(16.5,0)(16.5,1.5)
\qbezier(16.5,-1.5)(19.5,0)(19.5,1.5)
\multiput(4.5,1.5)(3,0){6}{\vector(0,1){0}}
\dashline[75]{0.15}(10.5,2.5)(10.5,-2.5)
\end{picture}
\caption{Decomposition of a full square rhomboid by 2-VDM.}
\label{fsr_fig3}
\end{figure}

Figure \ref{fsr_fig3} illustrates the example of decomposition of an $FSR$
by 2-VDM. Appearance of new edges does not change the essence of the
splitting procedure because these edges (labeled by $f$ and $g$) do not
cross the "splitting line" that passes between vertices $\overline{4}$ and $%
\underline{4}$. In all revealed subgraphs, edges $f_{i}$ and $g_{i}$ are
together with pairs of edges $d_{2i}$ and $e_{2i+1}$, and $e_{2i}$ and $%
d_{2i+1}$, respectively. Therefore, in the general case a current $FSR$
subgraph is decomposed into two $FSR$ subgraphs and four $\widehat{FSR}$
subgraphs, and its expression is the same as in statement (\ref{line1}).

$\widehat{FSR}$ and $\widehat{\widehat{FSR}}$ subgraphs are also decomposed
by 2-VDM into six new subgraphs in a similar way to $\widehat{SR}$ and $%
\widehat{\widehat{SR}}$ subgraphs. Thus the complexity of the expression $%
Ex(FSR(n))$ derived by 2-VDM is also $O\left( n^{\log _{2}6}\right) $. 
\begin{figure}[tbph]
\setlength{\unitlength}{0.7cm}
\par
\fontsize{8}{0}\selectfont%
\par
\begin{picture}(10,4.0)(0.5,-2.2)\thicklines
\multiput(0,0)(3,0){7}{\circle*{0.1875}}
\multiput(1.5,1.5)(3,0){6}{\circle*{0.1875}}
\multiput(1.5,-1.5)(3,0){6}{\circle*{0.1875}}
\put(0,0.35){\makebox(0,0){1}}
\put(3.02,0.35){\makebox(0,0){2}}
\put(6,0.35){\makebox(0,0){3}}
\put(9,0.35){\makebox(0,0){4}}
\put(12,0.35){\makebox(0,0){5}}
\put(15,0.35){\makebox(0,0){6}}
\put(18,0.35){\makebox(0,0){7}}
\put(1.5,1.8){\makebox(0,0){1}}
\put(4.5,1.8){\makebox(0,0){2}}
\put(7.5,1.8){\makebox(0,0){3}}
\put(10.3,1.8){\makebox(0,0){4}}
\put(13.5,1.8){\makebox(0,0){5}}
\put(16.5,1.8){\makebox(0,0){6}}
\put(1.5,-1.8){\makebox(0,0){1}}
\put(4.5,-1.8){\makebox(0,0){2}}
\put(7.5,-1.8){\makebox(0,0){3}}
\put(10.3,-1.8){\makebox(0,0){4}}
\put(13.5,-1.8){\makebox(0,0){5}}
\put(16.5,-1.8){\makebox(0,0){6}}
\multiput(0,0)(3,0){6}{\vector(1,1){1.5}}
\multiput(0,0)(3,0){6}{\vector(1,-1){1.5}}
\multiput(1.5,1.5)(3,0){6}{\vector(1,-1){1.5}}
\multiput(1.5,-1.5)(3,0){6}{\vector(1,1){1.5}}
\multiput(0,0)(3,0){6}{\vector(1,0){3}}
\multiput(1.5,1.5)(3,0){5}{\vector(1,0){3}}
\multiput(1.5,-1.5)(3,0){5}{\vector(1,0){3}}
\put(1.5,0.3){\makebox(0,0){$b_{1}$}}
\put(4.5,0.3){\makebox(0,0){$b_{2}$}}
\put(7.5,0.3){\makebox(0,0){$b_{3}$}}
\put(10.5,0.3){\makebox(0,0){$b_{4}$}}
\put(13.5,0.3){\makebox(0,0){$b_{5}$}}
\put(16.5,0.3){\makebox(0,0){$b_{6}$}}
\put(3,1.75){\makebox(0,0){$c_{1}$}}
\put(6,1.75){\makebox(0,0){$c_{2}$}}
\put(8.7,1.75){\makebox(0,0){$c_{3}$}}
\put(12,1.75){\makebox(0,0){$c_{4}$}}
\put(15,1.75){\makebox(0,0){$c_{5}$}}
\put(3,-1.75){\makebox(0,0){$a_{1}$}}
\put(6,-1.75){\makebox(0,0){$a_{2}$}}
\put(8.7,-1.75){\makebox(0,0){$a_{3}$}}
\put(12,-1.75){\makebox(0,0){$a_{4}$}}
\put(15,-1.75){\makebox(0,0){$a_{5}$}}
\put(0.55,0.95){\makebox(0,0){$e_{1}$}}
\put(3.55,0.95){\makebox(0,0){$e_{3}$}}
\put(6.55,0.95){\makebox(0,0){$e_{5}$}}
\put(9.55,0.95){\makebox(0,0){$e_{7}$}}
\put(12.55,0.95){\makebox(0,0){$e_{9}$}}
\put(15.5,0.95){\makebox(0,0){$e_{11}$}}
\put(2.4,0.95){\makebox(0,0){$e_{2}$}}
\put(5.4,0.95){\makebox(0,0){$e_{4}$}}
\put(8.4,0.95){\makebox(0,0){$e_{6}$}}
\put(11.4,0.95){\makebox(0,0){$e_{8}$}}
\put(14.45,0.95){\makebox(0,0){$e_{10}$}}
\put(17.45,0.95){\makebox(0,0){$e_{12}$}}
\put(0.6,-0.95){\makebox(0,0){$d_{1}$}}
\put(4.1,-0.75){\makebox(0,0){$d_{3}$}}
\put(7.1,-0.75){\makebox(0,0){$d_{5}$}}
\put(10.1,-0.75){\makebox(0,0){$d_{7}$}}
\put(13.1,-0.75){\makebox(0,0){$d_{9}$}}
\put(16.1,-0.75){\makebox(0,0){$d_{11}$}}
\put(1.9,-0.75){\makebox(0,0){$d_{2}$}}
\put(4.9,-0.75){\makebox(0,0){$d_{4}$}}
\put(7.9,-0.75){\makebox(0,0){$d_{6}$}}
\put(10.9,-0.75){\makebox(0,0){$d_{8}$}}
\put(13.85,-0.75){\makebox(0,0){$d_{10}$}}
\put(17.45,-0.95){\makebox(0,0){$d_{12}$}}
%
\put(2.55,-1.18){\makebox(0,0){$f_{1}$}}
\put(5.55,-1.18){\makebox(0,0){$f_{2}$}}
\put(8.55,-1.18){\makebox(0,0){$f_{3}$}}
\put(11.55,-1.18){\makebox(0,0){$f_{4}$}}
\put(14.55,-1.18){\makebox(0,0){$f_{5}$}}
\put(3.45,-1.15){\makebox(0,0){$g_{1}$}}
\put(6.45,-1.15){\makebox(0,0){$g_{2}$}}
\put(9.45,-1.15){\makebox(0,0){$g_{3}$}}
\put(12.45,-1.15){\makebox(0,0){$g_{4}$}}
\put(15.45,-1.15){\makebox(0,0){$g_{5}$}}
%
\qbezier(1.5,1.5)(1.5,0)(4.5,-1.5)
\qbezier(4.5,1.5)(4.5,0)(7.5,-1.5)
\qbezier(7.5,1.5)(7.5,0)(10.5,-1.5)
\qbezier(10.5,1.5)(10.5,0)(13.5,-1.5)
\qbezier(13.5,1.5)(13.5,0)(16.5,-1.5)
\multiput(4.5,-1.5)(3,0){5}{\vector(2,-1){0}}
\qbezier(1.5,-1.5)(4.5,0)(4.5,1.5)
\qbezier(4.5,-1.5)(7.5,0)(7.5,1.5)
\qbezier(7.5,-1.5)(10.5,0)(10.5,1.5)
\qbezier(10.5,-1.5)(13.5,0)(13.5,1.5)
\qbezier(13.5,-1.5)(16.5,0)(16.5,1.5)
\multiput(4.5,1.5)(3,0){5}{\vector(0,1){0}}
\dashline[75]{0.15}(9,2.5)(9,0.55)
\dashline[75]{0.15}(9,0.1)(9,-2.5)
\end{picture}
\caption{Decomposition of a full square rhomboid by 1-VDM.}
\label{fsr_fig31}
\end{figure}
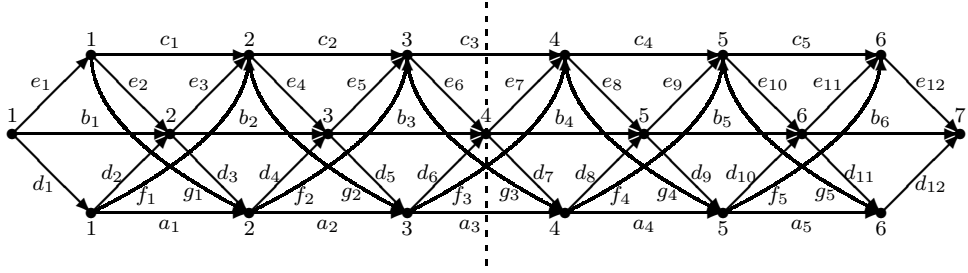

Now, consider decomposition of an $FSR$ by 1-VDM (see the example in Figure %
\ref{fsr_fig31}). One can see that edges $f_{3}$ and $g_{3}$ cross the
"splitting line" that passes through vertex $4$. Hence, any path from vertex 
$1$ to vertex $7$ passes through decomposition vertex $4$ or through one of
the following edges: $c_{3}$, $a_{3}$, $g_{3}$, $f_{3}$. Therefore, in the
general case%
\begin{align}
E(p,q)& \leftarrow E(p,i)E(i,q)+E(p,\overline{i-1})c_{i-1}E(\overline{i}%
,q)+E(p,\underline{i-1})a_{i-1}E(\underline{i},q)+  \label{line3} \\
& E(p,\overline{i-1})g_{i-1}E(\underline{i},q)+E(p,\underline{i-1})f_{i-1}E(%
\overline{i},q).  \notag
\end{align}

Additional parts $E(p,\overline{i-1})g_{i-1}E(\underline{i},q)$ and $E(p,%
\underline{i-1})f_{i-1}E(\overline{i},q)$ which are absent in statement (\ref%
{line2}), describe all paths from vertex $p$ to vertex $q$ passing via edges 
$g_{i-1}$ and $f_{i-1}$, respectively.

Hence, the expression of a current subgraph of size $n$ derived by 1-VDM
includes ten subexpressions related to subgraphs of size $n^{\prime }\approx 
$ $\frac{n}{2}$. This expression can be simplified by putting subexpressions
which appear twice outside the brackets. Finally, statement (\ref{line3})
may be presented as%
\begin{align}
E(p,q)& \leftarrow E(p,i)E(i,q)+E(p,\overline{i-1})\left( c_{i-1}E(\overline{%
i},q)+g_{i-1}E(\underline{i},q)\right) +  \label{line4} \\
& E(p,\underline{i-1})\left( a_{i-1}E(\underline{i},q)+f_{i-1}E(\overline{i}%
,q)\right) ,  \notag
\end{align}%
i.e., the resulting expression for $FSR$ consists of eight subexpressions
related to subgraphs of size $n^{\prime }\approx $ $\frac{n}{2}$ and four
additional literals. The expressions for $\widehat{FSR}$ and $\widehat{%
\widehat{FSR}}$ are constructed in the same way. Thus by the master theorem,
the complexity of the expression $Ex(FSR(n))$ derived by 1-VDM is $O\left(
n^{3}\right) $.

Therefore, 2-VDM that numerically is less efficient than 1-VDM for square
rhomboids is significantly more efficient for full square rhomboids. For
this reason we compute $Ex(FSR)$ by the following recursive relations based
on 2-VDM:

\begin{enumerate}
\item \label{fda_1}$E(p,p)=1$

\item $E(p,\overline{p})=e_{2p-1}$

\item $E(p,\underline{p})=d_{2p-1}$

\item $E(\overline{p},p+1)=e_{2p}$

\item $E(\underline{p},p+1)=d_{2p}$

\item \label{fda_6}$E(\overline{p},\overline{p+1})=c_{p}+e_{2p}e_{2p+1}$

\item $E(\overline{p},\underline{p+1})=e_{2p}d_{2p+1}+g_{p}$

\item $E(\underline{p},\overline{p+1})=d_{2p}e_{2p+1}+f_{p}$

\item \label{fda_9}$E(\underline{p},\underline{p+1})=a_{p}+d_{2p}d_{2p+1}$

\item \label{fda_10}$E(p,\overline{q})=E(p,i)b_{i}E(i+1,\overline{q})+E(p,%
\overline{i})E(\overline{i},\overline{q})+E(p,\underline{i})E(\underline{i},%
\overline{q})$, $i=\left\lfloor \frac{q+p}{2}\right\rfloor \,(q>\nolinebreak
p)\smallskip $

\item \label{fda_11}$E(p,\underline{q})=E(p,i)b_{i}E(i+1,\underline{q})+E(p,%
\overline{i})E(\overline{i},\underline{q})+E(p,\underline{i})E(\underline{i},%
\underline{q})$, $i=\left\lfloor \frac{q+p}{2}\right\rfloor $\thinspace $%
(q>\nolinebreak p)$

\item $E(\overline{p},q)=E(\overline{p},i)b_{i}E(i+1,q)+E(p,\overline{i})E(%
\overline{i},q)+E(\overline{p},\underline{i})E(\underline{i},q)$, $%
i=\left\lfloor \frac{q+p}{2}\right\rfloor $\thinspace $(q>\nolinebreak
p+\nolinebreak 1)\smallskip $

\item $E(\underline{p},q)=E(\underline{p},i)b_{i}E(i+1,q)+E(\underline{p},%
\overline{i})E(\overline{i},q)+E(\underline{p},\underline{i})E(\underline{i}%
,q)$, $i=\left\lfloor \frac{q+p}{2}\right\rfloor $\thinspace $%
(q>\nolinebreak p+\nolinebreak 1)\smallskip $

\item \label{fda_14}$E(\overline{p},\overline{q})=E(\overline{p}%
,i)b_{i}E(i+1,\overline{q})+E(\overline{p},\overline{i})E(\overline{i},%
\overline{q})+E(\overline{p},\underline{i})E(\underline{i},\overline{q})$, $%
i=\frac{q+p}{2}\,(q>\nolinebreak p+\nolinebreak 1)\smallskip $

\item $E(\overline{p},\underline{q})=E(\overline{p},i)b_{i}E(i+1,\underline{q%
})+E(\overline{p},\overline{i})E(\overline{i},\underline{q})+E(\overline{p},%
\underline{i})E(\underline{i},\underline{q})$, $i=\frac{q+p}{2}%
\,(q>\nolinebreak p+\nolinebreak 1)\smallskip $

\item $E(\underline{p},\overline{q})=E(\underline{p},i)b_{i}E(i+1,\overline{q%
})+E(\underline{p},\overline{i})E(\overline{i},\overline{q})+E(\underline{p},%
\underline{i})E(\underline{i},\overline{q})$, $i=\frac{q+p}{2}%
\,(q>\nolinebreak p+\nolinebreak 1)\smallskip $

\item \label{fda_17}$E(\underline{p},\underline{q})=E(\underline{p}%
,i)b_{i}E(i+1,\underline{q})+E(\underline{p},\overline{i})E(\overline{i},%
\underline{q})+E(\underline{p},\underline{i})E(\underline{i},\underline{q})$%
, $i=\frac{q+p}{2}\,(q>\nolinebreak p+\nolinebreak 1)\smallskip $

\item \label{fda_18}$E(p,q)=E(p,i)b_{i}E(i+1,q)+E(p,\overline{i})E(\overline{%
i},q)+E(p,\underline{i})E(\underline{i},q)$, $i=\frac{q+p-1}{2}%
\,(q>\nolinebreak p)$.
\end{enumerate}

The following lemma results from relations \ref{fda_6} -- \ref{fda_9} and %
\ref{fda_14} -- \ref{fda_17}.

\begin{lemma}
\label{lem_fsr_2-vdm}Complexities of expressions $Ex\left( trapezoidal\text{ }%
\widehat{\widehat{FSR}}(n)\right) $ and\linebreak $Ex\left( parallelogram%
\text{ }\widehat{\widehat{FSR}}(n)\right) $ derived by 2-VDM are equal.
\end{lemma}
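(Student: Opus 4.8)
The plan is to exploit the mirror symmetry of the $FSR$ that interchanges its upper and lower halves, and then to match the 2-VDM relations for a parallelogram and a trapezoidal $\widehat{\widehat{FSR}}$ term by term. Recall that a parallelogram $\widehat{\widehat{FSR}}$ has its source and sink on the same side and is governed by relations \ref{fda_14} and \ref{fda_17}, whereas a trapezoidal one has them on opposite sides and is governed by the relations for $E(\overline{p},\underline{q})$ and $E(\underline{p},\overline{q})$ lying in the range \ref{fda_14}--\ref{fda_17}. First I would make the symmetry precise. Let $\sigma$ be the reflection that interchanges each upper vertex $\overline{x}$ with the lower vertex $\underline{x}$ (fixing the basic vertices) together with the relabeling $e_{j}\leftrightarrow d_{j}$, $c_{p}\leftrightarrow a_{p}$, $f_{p}\leftrightarrow g_{p}$, each $b_{p}$ staying put. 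The base cases \ref{fda_6}--\ref{fda_9} show that $\sigma$ sends $E(\overline{p},\overline{p+1})$ to $E(\underline{p},\underline{p+1})$ and $E(\overline{p},\underline{p+1})$ to $E(\underline{p},\overline{p+1})$, and the recursive system is arranged so that $\sigma$ carries each relation to another of the same family: the single-leaf pair \ref{fda_10}$\leftrightarrow$\ref{fda_11}, the parallelogram pair \ref{fda_14}$\leftrightarrow$\ref{fda_17}, and the two trapezoidal relations inside \ref{fda_14}--\ref{fda_17}. Crucially the splitting index $i=\frac{q+p}{2}$ is the same for a relation and its $\sigma$-image, since it ignores the upper/lower orientation, so a straightforward induction on size gives $\sigma(E(X))=E(\sigma(X))$ for every subgraph $X$ arising in the decomposition. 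As $\sigma$ only renames literals it preserves complexity, and writing $L(\cdot)$ for the number of literals we obtain, for every size,
\[
L(E(\overline{p},\overline{q}))=L(E(\underline{p},\underline{q})),\quad L(E(\overline{p},\underline{q}))=L(E(\underline{p},\overline{q})),\quad L(E(r,\overline{q}))=L(E(r,\underline{q})),
\]
where $r$ denotes a basic source.

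Next I would compare a parallelogram and a trapezoidal of equal size by placing them on the same vertices $p,q$ and applying relation \ref{fda_14} to $E(\overline{p},\overline{q})$ and the matching trapezoidal relation to $E(\overline{p},\underline{q})$; both split at the same $i=\frac{q+p}{2}$. Three of the six subexpressions, namely $E(\overline{p},i)$, $E(\overline{p},\overline{i})$ and $E(\overline{p},\underline{i})$, occur identically in the two relations, and the literal $b_{i}$ contributes equally. The three remaining pairs are $E(i+1,\overline{q})$ versus $E(i+1,\underline{q})$, then $E(\overline{i},\overline{q})$ versus $E(\underline{i},\underline{q})$, and finally $E(\underline{i},\overline{q})$ versus $E(\overline{i},\underline{q})$. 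Each pair consists of two subgraphs of the same size exchanged by $\sigma$ (single-leaf, parallelogram, and trapezoidal respectively), so by the displayed equalities their complexities agree. Summing, the literal counts produced by \ref{fda_14} and by the trapezoidal relation coincide, which is exactly the assertion of the lemma.

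The main obstacle is the first step: verifying that $\sigma$ genuinely descends from the graph to the entire recursively generated expression. This requires checking that each relation in \ref{fda_1}--\ref{fda_18} is carried by the bar/underline swap and the edge relabeling to another relation of the same shape, and that the decomposition vertex is chosen symmetrically; the induction then propagates the identity $\sigma(E(X))=E(\sigma(X))$ through all subgraph types at once. Once this $\sigma$-invariance is in place, the term-by-term matching of the second step is routine, since the parallelogram and trapezoidal decompositions share exactly the same left-hand subexpressions and differ only by a reflection in their right-hand subexpressions.
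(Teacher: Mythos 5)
Your argument is correct, and its skeleton is the same as the paper's: an induction on the 2-VDM recursion that matches, term by term, the six subexpressions produced by relation \ref{fda_14} for a parallelogram against those produced by the trapezoidal relation, with the common literal $b_{i}$ and the identical left-hand factors $E(\overline{p},i)$, $E(\overline{p},\overline{i})$, $E(\overline{p},\underline{i})$ cancelling. Where you genuinely diverge is in how the remaining three pairs are handled. The paper simply declares that both expressions are built from one subexpression of each of the six ``kinds'' --- $Ex(\widehat{FSR})$, trapezoidal and parallelogram $Ex(\widehat{\widehat{FSR}})$, each in sizes $\lceil n/2\rceil$ and $\lfloor n/2\rfloor$ --- and sums; this implicitly assumes that all orientations within a kind (e.g.\ $E(i+1,\overline{q})$ versus $E(i+1,\underline{q})$, or $E(\overline{i},\overline{q})$ versus $E(\underline{i},\underline{q})$) have equal complexity, an assumption also baked into the single functions $\widehat{T}(n)$ and $\widehat{\widehat{T}}(n)$ of Proposition \ref{th_fda}. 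Your reflection $\sigma$ (swap $\overline{x}\leftrightarrow\underline{x}$, $e\leftrightarrow d$, $c\leftrightarrow a$, $f\leftrightarrow g$, fix $b$) is precisely the missing justification: it is a literal-renaming bijection, the splitting index is orientation-independent, and each of relations \ref{fda_1}--\ref{fda_18} is carried to another of the same shape, so induction gives $\sigma(E(X))=E(\sigma(X))$ and hence equal literal counts for $\sigma$-paired subgraphs. The cost of your route is the bookkeeping of checking $\sigma$-equivariance across all eighteen relations; what it buys is a rigorous footing for the paper's ``each kind appears once'' step, which as written is an assertion rather than a proof.
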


\begin{proof}
According to relations \ref{fda_6} -- \ref{fda_9}, $Ex\left( \widehat{%
\widehat{FSR}}\left( 1\right) \right) $ contains three literals for
trapezoidal and parallelogram $\widehat{\widehat{FSR}}$ graphs. Expressions $%
Ex\left( trapezoidal\text{ }\widehat{\widehat{FSR}}\left( n\right) \right) $
and $Ex\left( parallelogram\text{ }\widehat{\widehat{FSR}}\left( n\right)
\right) $ consist of the same components (see relations \ref{fda_14} -- \ref%
{fda_17}) for $n>1$. They are literal $b_{i}$ and the following six
subexpressions:

$Ex\left( \widehat{FSR}\left( \left\lceil \frac{n}{2}\right\rceil \right)
\right) $; $Ex\left( \widehat{FSR}\left( \left\lfloor \frac{n}{2}%
\right\rfloor \right) \right) $;

$Ex\left( trapezoidal\text{ }\widehat{\widehat{FSR}}\left( \left\lceil \frac{%
n}{2}\right\rceil \right) \right) $; $Ex\left( trapezoidal\text{ }\widehat{%
\widehat{FSR}}\left( \left\lfloor \frac{n}{2}\right\rfloor \right) \right) $;

$Ex\left( parallelogram\text{ }\widehat{\widehat{FSR}}\left( \left\lceil 
\frac{n}{2}\right\rceil \right) \right) $; $Ex\left( parallelogram\text{ }%
\widehat{\widehat{FSR}}\left( \left\lfloor \frac{n}{2}\right\rfloor \right)
\right) $.

The subexpression of each kind appears once in $Ex\left( trapezoidal\text{ }%
\widehat{\widehat{FSR}}\right) $ and once in $Ex\left( parallelogram\text{ }%
\widehat{\widehat{FSR}}\right) $. Hence, the expression complexity for any $%
\widehat{\widehat{FSR}}\left( n\right) $ is equal to the sum of complexities
of subexpressions above increased by one. For this reason, complexities of
expressions $Ex\left( trapezoidal\text{ }\widehat{\widehat{FSR}}\left(
n\right) \right) $ and $Ex\left( parallelogram\text{ }\widehat{\widehat{FSR}}%
\left( n\right) \right) $are equal.
\end{proof}

The following proposition results from Lemma \ref{lem_fsr_2-vdm} and
relations \ref{fda_1} -- \ref{fda_18}.

\begin{proposition}
\label{th_fda}The total number of literals $T(n)$ in the expression
$Ex(FSR(n))$ derived by 2-VDM is defined recursively as follows:

1) $T(1)=0;\ $2) $\widehat{T}(1)=1;\ $3) $\widehat{\widehat{T}}(1)=3$

4) $T(n)=T\left(  \left\lceil \frac{n}{2}\right\rceil \right)  +T\left(
\left\lfloor \frac{n}{2}\right\rfloor \right)  +2\widehat{T}\left(
\left\lceil \frac{n}{2}\right\rceil \right)  +2\widehat{T}\left(  \left\lfloor
\frac{n}{2}\right\rfloor \right)  +1\quad(n>1)$

5) $\widehat{T}(n)=T\left(  \left\lceil \frac{n}{2}\right\rceil \right)
+\widehat{T}\left(  \left\lfloor \frac{n}{2}\right\rfloor \right)
+2\widehat{T}\left(  \left\lceil \frac{n}{2}\right\rceil \right)
+2\widehat{\widehat{T}}\left(  \left\lfloor \frac{n}{2}\right\rfloor \right)
+1\quad(n>1)$

6) $\widehat{\widehat{T}}(n)=\widehat{T}\left(  \left\lceil \frac{n}%
{2}\right\rceil \right)  +\widehat{T}\left(  \left\lfloor \frac{n}%
{2}\right\rfloor \right)  +2\widehat{\widehat{T}}\left(  \left\lceil \frac
{n}{2}\right\rceil \right)  +2\widehat{\widehat{T}}\left(  \left\lfloor
\frac{n}{2}\right\rfloor \right)  +1\quad(n>1)$,\smallskip

where $\widehat{T}(n)$ and $\widehat{\widehat{T}}(n)$ are the total numbers of
literals in $Ex(\widehat{FSR}(n))$ and $Ex\left(  \widehat{\widehat{FSR}%
}\left(  n\right)  \right)  $, respectively.
\end{proposition}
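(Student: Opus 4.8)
The plan is to proceed by strong induction on $n$, and the single structural observation that makes everything work is that each of the 2-VDM relations \ref{fda_10}--\ref{fda_18} writes its left-hand side as a bare sum of products of six strictly smaller subexpressions together with one explicit literal $b_i$, with no common factor ever pulled outside a bracket. Since the number of literals in $A+B$ and the number in $A\cdot B$ both equal the number in $A$ plus the number in $B$, total literal count is additive over every such relation: the complexity of each left-hand side is exactly the sum of the complexities of its six constituent subgraphs, plus $1$ for $b_i$. I would isolate this additivity first, because it is precisely what distinguishes 2-VDM (where nothing is shared) from the factored 1-VDM recursion \ref{line4}, and it reduces the whole proposition to bookkeeping of the families and sizes of the six subgraphs.

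Next I would settle the base cases directly from the non-recursive relations: \ref{fda_1} is an empty product contributing no literals, so $T(1)=0$; each of the four single-edge relations lying between \ref{fda_1} and \ref{fda_6} exhibits exactly one edge label, so $\widehat{T}(1)=1$; and each of the four dipterous relations \ref{fda_6}--\ref{fda_9} displays exactly three literals, so $\widehat{\widehat{T}}(1)=3$. For the inductive step I would take one representative relation per count and read off, for each of its six subgraphs, both its family (full, single-leaf, or dipterous) and its size. Using \ref{fda_18} with $i=\frac{q+p-1}{2}$ and $n=q-p+1$, the two full pieces $E(p,i),E(i+1,q)$ have sizes $\{i-p+1,\,q-i\}=\{\lceil n/2\rceil,\lfloor n/2\rfloor\}$, while the four single-leaf pieces split into two of size $i-p+1$ and two of size $q-i$, giving $2\widehat{T}(\lceil n/2\rceil)+2\widehat{T}(\lfloor n/2\rfloor)$ and hence the asserted recurrence for $T(n)$. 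Using \ref{fda_10} with $i=\left\lfloor\frac{q+p}{2}\right\rfloor$, one checks $i-p+1=\lceil n/2\rceil$ and $q-i=\lfloor n/2\rfloor$, and the six pieces are one full graph of size $\lceil n/2\rceil$, three single-leaf graphs (two of size $\lceil n/2\rceil$, one of size $\lfloor n/2\rfloor$), and two dipterous graphs of size $\lfloor n/2\rfloor$, yielding the recurrence for $\widehat{T}(n)$. Using \ref{fda_14} with $i=\frac{q+p}{2}$ and $n=q-p$, the six pieces are two single-leaf graphs (one of each half-size) and four dipterous graphs (two of each half-size), yielding the recurrence for $\widehat{\widehat{T}}(n)$.

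Here Lemma~\ref{lem_fsr_2-vdm} does the decisive work: the dipterous subgraphs generated in \ref{fda_10} and \ref{fda_14} are of mixed kind (upper--upper, lower--upper, and so on), and it is only because the trapezoidal and parallelogram complexities coincide that all of them may be counted by the single function $\widehat{\widehat{T}}$. I would also note, by the same inductive additivity argument applied to the symmetric relations, that the four single-leaf varieties share one complexity and the four dipterous varieties share one complexity, so that $\widehat{T}(n)$ and $\widehat{\widehat{T}}(n)$ are genuinely well defined; checking that \ref{fda_11} together with the two unlabelled single-leaf relations reproduces the $\widehat{T}$ recurrence, and that \ref{fda_14}--\ref{fda_17} reproduce the $\widehat{\widehat{T}}$ recurrence, confirms that the choice of representative is immaterial.

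I expect the main obstacle to be precisely this size bookkeeping. The split index $i$ carries a floor or a ceiling that must be reconciled with three different size conventions--a single-leaf graph from $p$ to $\overline{q}$ spans $q-p+1$ basic vertices, whereas a dipterous graph from $\overline{p}$ to $\overline{q}$ spans only $q-p$--so the half-size identities $i-p+1=\lceil n/2\rceil$, $q-i=\lfloor n/2\rfloor$ and their dipterous analogues each have to be verified on their own terms rather than quoted once. Once these identifications are pinned down, the three recurrences drop out of the additivity observation with no further computation, and the base cases together with the Lemma close the induction.
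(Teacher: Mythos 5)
Your proposal is correct and follows essentially the same route as the paper's own proof: the base cases are read off from relations \ref{fda_1}--\ref{fda_9}, the three recurrences from the six-subgraph structure of relations \ref{fda_10}--\ref{fda_18} plus the extra literal $b_i$, and Lemma~\ref{lem_fsr_2-vdm} is invoked to count both dipterous varieties with the single function $\widehat{\widehat{T}}$. You simply make explicit the additivity of literal counts and the size bookkeeping that the paper leaves implicit.
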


\begin{proof}
Initial formulae (1 -- 3) follow directly from relations \ref{fda_1} -- \ref%
{fda_9} of 2-VDM. General formulae (4 -- 6) are based on the structure of
expressions \ref{fda_10} -- \ref{fda_18} of 2-VDM and on Lemma \ref%
{lem_fsr_2-vdm}. Indeed, the location of the split is in the middle of all
subgraphs, expressions \ref{fda_10} -- \ref{fda_18} include one additional
literal ($b_{i}$) and, by Lemma \ref{lem_fsr_2-vdm}, complexities of
expressions $Ex\left( trapezoidal\text{ }\widehat{\widehat{FSR}}\left(
n\right) \right) $ and\linebreak $Ex\left( parallelogram\text{ }\widehat{%
\widehat{FSR}}\left( n\right) \right) $ may be denoted equally.
\end{proof}

It is of interest to obtain exact formulae describing complexity of the
expression $Ex(FSR(n))$ derived by 2-VDM. We attempt to do it for $n$ that
is a power of two, i.e., $n=2^{k}$ for some positive integer $k\geq 1$.
Formulae (4 -- 6) of Proposition \ref{th_fda} are presented in this case as 
\begin{equation}
\left\{ 
\begin{array}{l}
T\left( n\right) =2T\left( \frac{n}{2}\right) +4\widehat{T}\left( \frac{n}{2}%
\right) +1 \\ 
\widehat{T}\left( n\right) =T\left( \frac{n}{2}\right) +3\widehat{T}\left( 
\frac{n}{2}\right) +2\widehat{\widehat{T}}\left( \frac{n}{2}\right) +1 \\ 
\widehat{\widehat{T}}\left( n\right) =2\widehat{T}\left( \frac{n}{2}\right)
+4\widehat{\widehat{T}}\left( \frac{n}{2}\right) +1,%
\end{array}%
\right.  \label{fsrf3}
\end{equation}%
respectively. The following explicit formulae for simultaneous recurrences (%
\ref{fsrf3})\textbf{\ }are obtained by the method for linear recurrence
relations solving \cite{Ros}: 
\begin{align*}
T(n)& =\frac{89}{45}n^{\log _{2}6}-\frac{20}{9}n^{\log _{2}3}-\frac{1}{5} \\
\widehat{T}(n)& =\frac{89}{45}n^{\log _{2}6}-\frac{5}{9}n^{\log _{2}3}-\frac{%
1}{5} \\
\widehat{\widehat{T}}(n)& =\frac{89}{45}n^{\log _{2}6}+\frac{10}{9}n^{\log
_{2}3}-\frac{1}{5}.
\end{align*}

\section{A Combined Method for Generating Expressions of Full Square
Rhomboids}

\setlength{\textfloatsep}{0.1cm}

As shown in the previous section, the complexity $T(n)$ of the expression
\linebreak $Ex(FSR(n))$ is $O\left( n^{\log _{2}6}\right) $ if it is derived
by 2-VDM and $O\left( n^{3}\right) $ if 1-VDM is used. However, despite on
the asymptotic advantage of 2-VDM, expressions constructed by 1-VDM are
shorter for some small values of $n$. One can see (Table \ref{tab1}) that
complexities for 1-VDM are smaller than corresponding complexities for 2-VDM
when $n=3$ and, as a result, when $n=5$ and $n=6$. Expressions of graphs
with these sizes are included by expressions of graphs with larger sizes.

\begin{table}[tbp] \centering%
\begin{tabular}{||c||c|c|c||c|c|c||}
\hline\hline
$n$ & $%
\begin{array}{c}
T(n)\text{,} \\ 
\text{1-VDM}%
\end{array}
$ & $%
\begin{array}{c}
\widehat{T}\left( n\right) \text{,} \\ 
\text{1-VDM}%
\end{array}
$ & $%
\begin{array}{c}
\widehat{\widehat{T}}\left( n\right) \text{,} \\ 
\text{1-VDM}%
\end{array}
$ & $%
\begin{array}{c}
T(n)\text{,} \\ 
\text{2-VDM}%
\end{array}
$ & $%
\begin{array}{c}
\widehat{T}\left( n\right) \text{,} \\ 
\text{2-VDM}%
\end{array}
$ & $%
\begin{array}{c}
\widehat{\widehat{T}}\left( n\right) \text{,} \\ 
\text{2-VDM}%
\end{array}
$ \\ \hline\hline
$1$ & $0$ & $1$ & $3$ & $0$ & $1$ & $3$ \\ \hline
$2$ & $5$ & $10$ & $15$ & $5$ & $10$ & $15$ \\ \hline
$3$ & $20$ & $29$ & $42$ & $28$ & $33$ & $48$ \\ \hline
$4$ & $53$ & $66$ & $85$ & $51$ & $66$ & $81$ \\ \hline
$5$ & $104$ & $123$ & $152$ & $120$ & $135$ & $170$ \\ \hline
$6$ & $175$ & $204$ & $243$ & $189$ & $224$ & $259$ \\ \hline
$7$ & $284$ & $323$ & $388$ & $278$ & $313$ & $358$ \\ \hline
$8$ & $409$ & $474$ & $531$ & $367$ & $412$ & $457$ \\ \hline
$9$ & $608$ & $665$ & $760$ & $574$ & $619$ & $704$ \\ \hline
$10$ & $793$ & $888$ & $975$ & $781$ & $866$ & $951$ \\ \hline
$20$ & $6325$ & $6800$ & $7061$ & $5027$ & $5282$ & $5537$ \\ \hline
$30$ & $21351$ & $22326$ & $23301$ & $13077$ & $13472$ & $13867$ \\ \hline
$40$ & $50905$ & $53280$ & $54063$ & $31183$ & $31948$ & $32713$ \\ \hline
$50$ & $98935$ & $100690$ & $102445$ & $54493$ & $55518$ & $56543$ \\ \hline
$60$ & $171195$ & $176070$ & $178995$ & $80043$ & $81228$ & $82413$ \\ 
\hline\hline
\end{tabular}
\caption{Complexities for 1-VDM and 2-VDM.\label{tab1}}%
\end{table}%

\begin{table}[tbp] \centering%
\begin{tabular}{||c||c|c|c||}
\hline\hline
$n$ & $T(n)$ & $\widehat{T}\left( n\right) $ & $\widehat{\widehat{T}}\left(
n\right) $ \\ \hline\hline
$1$ & $0$ & $1$ & $3$ \\ \hline
$2$ & $5$ & $10$ & $15$ \\ \hline
$3$ & $20$ & $29$ & $42$ \\ \hline
$4$ & $51$ & $66$ & $81$ \\ \hline
$5$ & $104$ & $119$ & $152$ \\ \hline
$6$ & $157$ & $192$ & $227$ \\ \hline
$7$ & $262$ & $297$ & $342$ \\ \hline
$8$ & $367$ & $412$ & $457$ \\ \hline
$9$ & $526$ & $571$ & $652$ \\ \hline
$10$ & $685$ & $766$ & $847$ \\ \hline
$20$ & $4435$ & $4678$ & $4921$ \\ \hline
$30$ & $12789$ & $13184$ & $13579$ \\ \hline
$40$ & $27583$ & $28312$ & $29041$ \\ \hline
$50$ & $48445$ & $49470$ & $50495$ \\ \hline
$60$ & $78315$ & $79500$ & $80685$ \\ \hline\hline
\end{tabular}
\caption{Complexities for modified 2-VDM.\label{tab2}}%
\end{table}%

We modify 2-VDM through generating expressions of graphs with size $3$ by
1-VDM and obtain the following new values: $T(5)=104$, $\widehat{T}\left(
5\right) =119$, $\widehat{\widehat{T}}\left( 5\right) =154$, $T(6)=157$, $%
\widehat{T}\left( 6\right) =192$, $\widehat{\widehat{T}}\left( 6\right) =227$%
. So, all new values except $\widehat{\widehat{T}}\left( 5\right) $ are not
greater than corresponding values for 1-VDM presented in Table \ref{tab1}.
For this reason, we additionally improve 2-VDM and derive by 1-VDM\ the
expression $Ex\left( \widehat{\widehat{FSR}}\left( 5\right) \right) $ as
well. The final complexities for modified 2-VDM are presented in Table \ref%
{tab2}. For all $n$ except $n=3$ and $n=5$ they are determined in accordance
with Proposition \ref{th_fda}. In addition, we use the following formulae
which result from statement (\ref{line4}) and from analogous relations for
computing expressions $Ex\left( \widehat{FSR}\right) $ and $Ex\left( 
\widehat{\widehat{FSR}}\right) $ by 1-VDM:

$T(n)=T\left( \left\lfloor \frac{n}{2}\right\rfloor +1\right) +T\left(
\left\lceil \frac{n}{2}\right\rceil \right) +2\widehat{T}\left( \left\lfloor 
\frac{n}{2}\right\rfloor \right) +4\widehat{T}\left( \left\lceil \frac{n}{2}%
\right\rceil -1\right) +4\quad(n=3)$

$\widehat{T}(n)=T\left( \left\lfloor \frac{n}{2}\right\rfloor +1\right) +%
\widehat{T}\left( \left\lceil \frac{n}{2}\right\rceil \right) +4\widehat {T}%
\left( \left\lfloor \frac{n}{2}\right\rfloor \right) +2\widehat {\widehat{T}}%
\left( \left\lceil \frac{n}{2}\right\rceil -1\right) +4\quad(n=3)$

$\widehat{\widehat{T}}(n)=\widehat{T}\left( \left\lfloor \frac{n}{2}%
\right\rfloor +1\right) +\widehat{T}\left( \left\lceil \frac{n}{2}%
\right\rceil \right) +2\widehat{\widehat{T}}\left( \left\lfloor \frac{n}{2}%
\right\rfloor \right) +4\widehat{\widehat{T}}\left( \left\lceil \frac{n}{2}%
\right\rceil -1\right) +4\quad (n=3,5).$

\section{Conclusions and Future Work}

Various non-series-parallel graphs (Fibonacci graphs, square rhomboids, full
square rhomboids, etc.) have expressions with polynomial complexity despite
their relatively complex structure. The existence of a decomposition method
for a graph $G$ is a sufficient condition for the existence of such
expression for $G$. The complexity depends, in particular, on the number of
revealed subgraphs in each recursive step of the decomposition procedure.
Different decomposition methods may be applied to the same class of graphs
and one of the methods may be more efficient for one class and less
efficient for another one.

An undirected graph in which every subgraph has a vertex of degree at most $%
k $ is called \textit{k-inductive} \cite{Ira}. For instance, trees are $1$%
-inductive graphs, and planar graphs are $5$-inductive. \textit{Random
scale-free networks} \cite{BaA} demonstrate important practical examples of $%
k$-inductive graphs. As follows from \cite{ChE}, a graph $G$ is $k$%
-inductive if and only if the edges of $G$ can be oriented to form a
directed acyclic graph with out-degree of its vertices at most $k$. Thus
underlying graphs of Fibonacci graphs are $2$-inductive while underlying
graphs of square and full square rhomboids are $3$-inductive.

We intend to extend the presented decomposition technique to a class of
st-dags whose underlying graphs are $k$-inductive.

\end{document}